\newtheorem{assumption}{Assumption}
\newcommand{\inP}{\mbox{$\,\stackrel{\scriptsize{\mbox{p}}}{\rightarrow}\,$}}
\newcommand{\as}{\mbox{$\,\,\stackrel{\scriptsize{\mathrm{a.s.}}}{\longrightarrow}\,\,$}}
\newcommand{\inD}{\mbox{$\,\stackrel{\scriptsize{\mbox{d}}}{\rightarrow}\,$}}
\theoremstyle{plain}
\newtheorem{theorem}{Theorem}[section]
\newtheorem{lemma}[theorem]{Lemma}
\newtheorem{corollary}[theorem]{Corollary}
\theoremstyle{definition}
\newtheorem{definition}[theorem]{Definition}
\theoremstyle{remark}
\newtheorem{remark}{Remark}
\begin{document}


\title{UNCERTAINTY QUANTIFICATION USING SIMULATION OUTPUT: BATCHING AS AN INFERENTIAL DEVICE
\author{
\name{Yongseok Jeon\textsuperscript{a}, Yi Chu\textsuperscript{b}, Raghu Pasupathy\textsuperscript{b}\thanks{CONTACT R. Pasupathy. Email: pasupath@purdue.edu} and Sara Shashaani\textsuperscript{a}}
\affil{\textsuperscript{a} Edward P. Fitts Department of Industrial and System Engineering, North Carolina State University, Raleigh, NC 27695, USA \\ \textsuperscript{b} Department of Statistics, Purdue University, West Lafayette, IN 47906, USA }
}
}

\maketitle

\begin{abstract}
We present \emph{batching} as an omnibus device for uncertainty quantification using simulation output. We consider the classical context of a simulationist performing uncertainty quantification on an estimator $\theta_n$ (of an unknown fixed quantity $\theta$) using only the output data $(Y_1,Y_2,\ldots,Y_n)$ gathered from a simulation. By \emph{uncertainty quantification}, we mean approximating the sampling distribution of the error $\theta_n-\theta$ toward: (A) estimating an ``assessment'' functional $\psi$, e.g., bias, variance, or quantile; or (B) constructing a $(1-\alpha)$-confidence region on $\theta$. We argue that batching is a remarkably simple and effective device for this purpose, and is especially suited for handling dependent output data such as what one frequently encounters in simulation contexts. We demonstrate that if the number of batches and the extent of their overlap are chosen appropriately, batching retains bootstrap's attractive theoretical properties of \emph{strong consistency} and \emph{higher-order accuracy}. For constructing confidence regions, we characterize two limiting distributions associated with a Studentized statistic. Our extensive numerical experience confirms theoretical insight, especially about the effects of batch size and batch overlap.
\end{abstract}

\begin{keywords}
output analysis, bias estimation, variance estimation, quantile estimation
\end{keywords}

\section{INTRODUCTION}\label{sec:intro} The context of this paper is a simulation experiment where data are collected in the service of estimating one or more fixed but unknown \emph{population parameters} associated with a simulated system. A typical example is a large hospital, a warehouse, or an airport, where a simulationist gathers data from computer experiments performed with the intent of estimating parameters such as the expected customer delay, a specific percentile of customer delay, the likelihood of delay exceeding a specified threshold, the expected length of ``long" delays, etc. Such \emph{point estimation}, however laborious, usually proceeds in a straightforward manner --- gather individual customer delay data from the simulation, and then construct the sample mean or sample percentile from the collected data as \emph{estimates} of the corresponding population parameters, in this case the population mean and the population percentile, respectively. Point estimation using simulation output is a basic topic that is usually extensively covered in most basic simulation textbooks such as~\cite{nel2013} and~\cite{2007law}.

\begin{remark}[Point Estimation] 
    The process of constructing an estimator $\theta_n$ of the population parameter $\theta$ is sometimes called \emph{point estimation} in classical statistics, see, for instance,~\cite[pp. 2]{1998lehcas}. This terminology is general and subsumes contexts where $\theta$ and $\theta_n$ reside in any appropriate space, e.g., Euclidean space, function space. We have, however, limited the use of such terminology to reduce the risk of a reader misunderstanding that $\theta$ and $\theta_n$ are restricted to being points in Euclidean space. Whenever convenient, we have thus preferred using the term \emph{estimation} instead of \emph{point estimation}, and \emph{estimator} instead of \emph{point estimator}. 
\end{remark}

This paper considers the question of assessing the error in a simulation point estimator. Specifically, any estimator constructed from output data generated by a stochastic simulation is \emph{random} in the sense that if the simulation experiment is repeated, the resulting output data, and the corresponding estimate of the same parameter, are bound to change. This leads the simulationist to ask if the extent of the fluctuations in the error can somehow be assessed without collecting any further data. More formally, can the uncertainty in the point estimator $\theta_n$ be quantified? Questions related to such error assessment are collectively and loosely called \emph{uncertainty quantification} here, and form the topic of this paper. Uncertainty quantification presents many open questions, especially when the point estimate is constructed from output data that form a time series.

\subsection{Uncertainty Quantification in the Simulation Context}\label{sec:statinf}
To make the question of uncertainty quantification using simulation output precise, suppose we have an output ``dataset'' $(Y_1,Y_2,\ldots,Y_n)$ of identically distributed $\mathcal{Y}$-valued random variables obtained somehow, e.g., using a simulation, in the service of estimating an unknown quantity $\theta \in \mathbb{R}^d$. For the purposes of this paper, we assume that $Y_j$ can be real ($\mathbb{R}$) or vector ($\mathbb{R}^d$) valued. (The results we present also extend to function-valued contexts, e.g., $Y_j$ represents the trace of the $j$-th customer or component, but we choose to remain in the Euclidean context to retain intuition.) Furthermore, $(Y_1,Y_2,\ldots,Y_n)$ can be data collected from a single long simulation run, or output data from $n$ independent replications. Formally, we assume that $(Y_1,Y_2,\ldots,Y_n)$ forms the initial segment of a \emph{stationary time-series} $\{Y_t, t \geq 1\}$. Especially important for the simulation context, $Y_1,Y_2, \ldots,$ may exhibit heavy serial correlation even though they each have the same distribution.  

Suppose that the output data $(Y_1,Y_2,\ldots,Y_n)$ are used appropriately to construct a point estimator $\theta_n$ of a fixed unknown parameter $\theta \in \mathbb{R}^d$. Again to anchor the reader's intuition, $\theta$ may represent a vector of quantiles of customer delays at various locations in a simulated network operating at steady state, in which case $\theta_n$ is the corresponding vector of estimators constructed from data obtained from a single long simulation run. Furthermore, while we refer to $\theta$ as a \emph{parameter}, the methods we present in this paper are better understood by viewing $\theta$ as a \emph{statistical functional} $\theta(P)$ of an unknown underlying probability measure $P$ that governs the simulation output data $(Y_1,Y_2,\ldots,Y_n)$. Accordingly, we use the notation $\theta$ and $\theta(P)$ interchangeably, and as appropriate; likewise, and as we shall see in more detail, we use the notation $\theta_n$ and $\theta(P_n)$ interchangeably and as appropriate, where $P_n$ is the empirical measure constructed from $(Y_1,Y_2, \ldots,Y_n)$. 

\begin{remark} (Simulation is Error Free) It is important that the question considered in this paper assumes that the model or simulation generating the output data $(Y_1,Y_2, \ldots,Y_n)$ is error free. This implies in turn that the population parameter $\theta$ associated with the distribution that generates the data $(Y_1,Y_2, \ldots,Y_n)$ is indeed the parameter that is of interest to the simulationist. Another implication is that questions such as \emph{model validation} that pertain to the correctness of the model do not lie within the purview of our treatment.
\end{remark}

For the purposes of this paper, \emph{uncertainty quantification} (UQ) in the simulation context refers approximating the \emph{sampling distribution} of the error\begin{equation}\label{esterr}\varepsilon_n := \theta_n - \theta,\end{equation} which is simply the distribution $P_{\varepsilon_n}$ of $\varepsilon_n$. Internalize the notion of the distribution of $\varepsilon_n$ by imagining the (impossible) experiment of gathering numerous realizations of $\varepsilon_n$, and then constructing a ``histogram.'' The key point of all uncertainty quantification is to approximate this sampling distribution of $\varepsilon_n$ without performing any additional simulation runs, that is, with only $(Y_1, Y_2, \ldots,Y_n)$. Approximating the sampling distribution of $\varepsilon_n$ is usually done in the service of two activities.
\begin{enumerate} \item[(A)] \emph{Estimating a vector of statistical functionals} $\psi: P_{\varepsilon_n} \to \mathbb{R}^d$, where $P_{\varepsilon_n}$ is the true unknown distribution of $\varepsilon_n$. Three typical examples of $\psi$ are the bias, variance, and quantiles associated with $\varepsilon_n$. The bias of $\theta_n$ is given by \begin{equation}\label{bias} \psi^{\text{b}}(P_{\varepsilon_n}) \equiv \mathbb{E}[\varepsilon_n] = \mathbb{E}\left[\theta_n\right] - \theta.\end{equation} The variance of $\varepsilon_n$ is given by \begin{equation} \label{var} \psi^{\text{v}}(P_{\varepsilon_n}) \equiv \Sigma_{\varepsilon_n} := \mathbb{E}\left[\left(\varepsilon_n - \mathbb{E}[\varepsilon_n]\right)\left(\varepsilon_n - \mathbb{E}[\varepsilon_n]\right)^\intercal\right].\end{equation} Since $\theta_n, \theta \in \mathbb{R}^d$, $\varepsilon_n$ in~\eqref{var} is a $d \times 1$ column vector and $\Sigma_{\varepsilon_n}$ is a $d \times d$ positive-definite covariance matrix. The ``marginal quantiles'' of the vector $\varepsilon_n : = (\varepsilon_{1,n}, \varepsilon_{2,n}, \ldots, \varepsilon_{d,n})$ are given by \begin{equation}\label{simulquant} \psi^{\text{q}}(P_{\varepsilon_n},\gamma) \equiv Q_{\varepsilon_n}(\gamma_1,\gamma_2,\ldots,\gamma_d) := \left(F_{1,n}^{-1}(\gamma_1),F_{2,n}^{-1}(\gamma_2),\ldots, F_{d,n}^{-1}(\gamma_d)\right),   \end{equation} where for each $i=1,2,\ldots,d$, $\gamma_s \in (0,1)$, $F_{i,n}$ is the cumulative distribution function (cdf) of the real-valued random variable $\varepsilon_{i,n}$, and $F_{i,n}^{-1}(\gamma_s):= \min\{x: F_{i,n}(x) \geq \gamma\}.$ The functional $\psi$ serves to assess the quality of the point estimator, which we call an \emph{assessment functional} to distinguish it from the functional $\theta$. \item[(B)] \emph{Constructing a $(1-\alpha)$ confidence region on $\theta \in \mathbb{R}^d$.} A (random) set $C_n \in \sigma(Y_1,Y_2,\ldots,Y_n)$ is a called a $(1-\alpha)$-confidence region on $\theta$ if it includes $\theta$ with probability $(1-\alpha)$ as $n \to \infty$, that is, \begin{equation}\label{conf} \lim_{n \to \infty}\mathbb{P}(\theta \in C_n) = 1-\alpha.\end{equation} 

\end{enumerate}

We present \emph{batching} as an easily implemented mechanism that consistently approximates $P_{\varepsilon_n}$, in the process accomplishing the activities in (A) and (B).

\subsection{Batching for Uncertainty Quantification}

The idea of batching data for inference is not new, and has been employed extensively in the contexts of confidence interval construction for the population mean and quantile, and for variance parameter estimation. For example, using batches within the classical time series context especially when constructing confidence intervals on the mean goes back to interpenetration samples by~\cite{1946mah}, the jacknife by~\cite{1949que}, pseudoreplication by~\cite{1969mcc}, and subsampling by~\cite{1969har}. The corresponding literature in the simulation context is also vast and dates back to~\cite{1963con},~\cite{1966mecmck}, and~\cite{1979fis}. These are precursors to the now mature methods to construct confidence intervals on the steady-state mean and quantiles using batched simulation output ---see~\cite{2007aleetal,meksch1984,1982sch,1982gly,2022passinyeh,2013calnak,2020huinak,2023suetal} for an entry into this stream of literature. An underlying premise of this paper is that batching is a much more general device than the existing literature suggests, and is better seen as a resampling device akin to bootstrapping~\citep{2012shatu} or subsampling~\citep{1999polromwol}. Accordingly, batching seamlessly facilitates (at least in principle) consistent approximation of $P_{\varepsilon_n}$ for accomplishing (A) and (B) in the previous section.

To see how, recall again the ``observed dataset'' $(Y_1,Y_2, \ldots,Y_n)$ in $(\mathcal{Y},\mathcal{A})$, and the empirical measure $$P_n(A) := n^{-1} \sum_{j=1}^n \delta_{Y_j}(A), \quad A \in \mathcal{A}$$ constructed from the observed dataset, where $\delta_{Y_j}(A)$ denotes the Dirac measure taking the value 1 if $Y_j\in A$ and 0 otherwise.  Fundamental to batching is a $\emph{batch}$ of contiguous observations from the dataset $(Y_1,Y_2,\ldots,Y_n)$, as shown in Figure~\ref{batching}. Group $(Y_1,Y_2,\ldots,Y_n)$ into $b_n$ possibly overlapping batches each of size $m_n$, the first of which consists of observations $Y_1,Y_2, \ldots, Y_{m_n}$, the second consisting of $Y_{d_n + 1}, Y_{d_n + 2}, \ldots, Y_{d_n + m_n}$, and so on, and the last batch consisting of $Y_{(b_n-1)d_n+1}, Y_{(b_n-1)d_n+2}, \ldots, Y_{n}$. (We have implicitly assumed that $(b_n-1)d_n + m_n = n$.) The quantity $d_n \geq 1$ represents the offset between batches, with the choice $d_n=1$ corresponding to ``fully-overlapping'' batches and any choice $d_n \geq m_n$ corresponding to ``non-overlapping'' batches. Notice then that the offset $d_n$ and the number of batches $b_n$ are related as $d_n = \frac{n-m_n}{b_n-1}.$ Now observe that the data in batches $1,2,\ldots,b_n$ can be used to construct the corresponding empirical measures: $$ P_{i,n}(A) := m_n^{-1} \sum_{j=1}^{m_n} \delta_{Y_{(i-1)d_n+j}}(A), \quad A \in \mathcal{A}; \,\, i=1,2,\ldots,b_n.$$ 

\begin{figure}[h]
\begin{tikzpicture}
\draw[thick,-{latex}] (0,0) -- (\textwidth,0);
\foreach \x in {0,0.3,...,13.75}
  \draw (\x,3pt) -- (\x,-3pt); 
\draw [thin, black,decorate,decoration={brace,amplitude=10pt,mirror},xshift=0.4pt,yshift=-2pt](0,0) -- (7.5,0) node[black,midway,yshift=-0.6cm] {\footnotesize batch $1$};
\draw [thin, black,decorate,decoration={brace,amplitude=10pt},xshift=0.4pt,yshift=2pt](4.5,0) -- (12,0) node[black,midway,yshift=0.6cm] {\footnotesize batch $2$};
\draw [thin, black,decorate,decoration={brace,amplitude=10pt,mirror},xshift=0.4pt,yshift=-2pt](9,0) -- (13.75,0) 
node[black,midway,yshift=-0.6cm] {\footnotesize batch $3$};
\draw[arrows=->,line width=.4pt](0,0)--(0.5,-0.9) node[black,yshift=-0.15cm] {\footnotesize $1$};
\draw[arrows=->,line width=.4pt](7.5,0)--(8,-0.9) node[black,yshift=-0.15cm] {\footnotesize $m_n$};
\draw[arrows=->,line width=.4pt](4.5,0)--(5,0.9) node[black,yshift=0.2cm] {\footnotesize $d_n +1$};
\draw[arrows=->,line width=.4pt](12,0)--(12.5,0.9) node[black,yshift=0.2cm] {\footnotesize $d_n + m_n$};
\draw[arrows=->,line width=.4pt](9,0)--(9.5,-0.9) node[black,yshift=-0.15cm] {\footnotesize $2d_n + 1$};
\end{tikzpicture}
\caption{The figure depicts partially overlapping batches. Batch 1 consists of observations $(Y_j, \ j = 1,2,\ldots,m_n)$; batch 2 consists of observations $(Y_j, \ j=d_n + 1, d_n + 2, \ldots, d_n + m_n)$, and so on, with batch $i$ consisting of $(Y_j, \ j=(i-1)d_n + 1, (i-1)d_n + 2, \ldots,(i-1)d_n+m_n).$ There are thus $b_n := \lfloor d_n^{-1}(n-m_n) \rfloor + 1$ batches in total, where $n$ is the size of the dataset.} \label{batching}
\end{figure}
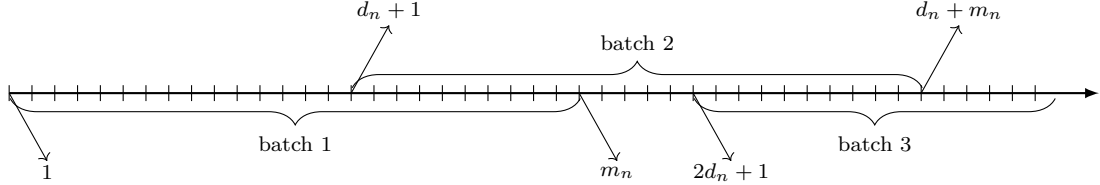 The crux of the batching idea for uncertainty quantification is that while the error $\varepsilon_n := \theta_n - \theta$ cannot be computed (because $\theta$ is unknown), under certain conditions it is ``well-approximated'' by the computable batch-error estimates \begin{equation}\label{errapp1}\varepsilon_{i,n} = \theta(P_{i,n}) - \theta_n, \quad i = 1,2,\ldots,b_n\end{equation} and \begin{equation}\label{errapp2} \tilde{\varepsilon}_{i,n} = \theta(P_{i,n}) - \bar{\theta}_n, \quad i = 1,2,\ldots,b_n,\end{equation} where $\bar{\theta}_n:=\frac{1}{b_n} \sum_{i=1}^{b_n} \theta(P_{i,n})$. The error approximations in~\eqref{errapp1} and~\eqref{errapp2} suggest two overlapping batch (OB) estimators of $P_{\varepsilon_n}$: \begin{align}\label{pbatch} P_{\mbox{\tiny OB-I},\varepsilon_n} &:= \frac{1}{b_n}\sum_{i=1}^{b_n} \delta_{{\varepsilon}_{i,n}}; \nonumber \\ P_{\mbox{\tiny OB-II},\varepsilon_n} &:= \frac{1}{b_n}\sum_{i=1}^{b_n} \delta_{\tilde{\varepsilon}_{i,n}}. \end{align}

It is important that the reader see why the measures $P_{\mbox{\tiny OB-I},\varepsilon_n}$ and $P_{\mbox{\tiny OB-II},\varepsilon_n}$ enable uncertainty quantification on $\varepsilon_n$. First, $P_{\mbox{\tiny OB-I},\varepsilon_n}$ and $P_{\mbox{\tiny OB-II},\varepsilon_n}$ are both observable, and thus $\psi(P_{\mbox{\tiny OB-I},\varepsilon_n})$ and $\psi(P_{\mbox{\tiny OB-II},\varepsilon_n})$ form ``plug-in'' estimators for $\psi(P)$. Second, upon identifying the weak limits of $P_{\mbox{\tiny OB-I},\varepsilon_n}$ and $P_{\mbox{\tiny OB-II},\varepsilon_n}$, a valid $(1-\alpha)$-confidence region on $\theta(P)$ can be constructed. Three more observations about the batching setup are noteworthy. \begin{enumerate} \item[(a)] The ``true'' error $\varepsilon_n := \theta(P_n) - \theta(P)$ is unknown since the probability measure $P$ in $\theta(P)$ is unknown. Batching constructs estimates $\varepsilon_{i,n}, i = 1,2,\ldots,b_n$ of $\varepsilon_n$ by replacing $P_n$ with $P_{i,n}$, and $P$ with $P_n$, in a manner reminiscent of resampling methods like bootstrapping~\citep{2012shatu,1992hall}.  \item[(b)] The measures $P_{\mbox{\tiny OB-I},\varepsilon_n}$ and $P_{\mbox{\tiny OB-II},\varepsilon_n}$ are ``random measures'' since they are each a function of the dataset $(Y_1,Y_2,\ldots,Y_n)$. \item[(c)] The random measures $P_{\mbox{\tiny OB-I},\varepsilon_n}$ and $P_{\mbox{\tiny OB-II},\varepsilon_n}$ are known but never explicitly computed (or stored) in practice. Instead, as we shall see, any desired statistical functional of $P_{\mbox{\tiny OB-I},\varepsilon_n}$ or $P_{\mbox{\tiny OB-II},\varepsilon_n}$ is directly computed from the available data.\end{enumerate}

\begin{remark}[Notational Abuse]\label{rem:scale}
    We use $P_{\mbox{\tiny OB-I},\varepsilon_n}$ and $P_{\mbox{\tiny OB-II},\varepsilon_n}$ (see~\eqref{pbatch}) for the proposed approximations to the desired probability measure $P_{\varepsilon_n}$. However, $\varepsilon_{i,n}$ and $\tilde{\varepsilon}_{i,n}$ need to be appropriately rescaled by $\sqrt{\frac{m_n}{n}}$ when comparing against the probability measure $P_{\varepsilon_n}$, implying that the notation.  $P_{\mbox{\tiny OB-I},\sqrt{\frac{m_n}{n}}\varepsilon_n}$ and $P_{\mbox{\tiny OB-II},\sqrt{\frac{m_n}{n}}\varepsilon_n}$ might be more appropriate. However, we choose against such notation on account of the resulting unwieldiness, but instead make the rescaling explicit when presenting the theoretical results in Sections~\ref{sec:estimatingpsi} and~\ref{sec:confreg}.
\end{remark}

\begin{table}[htb]
\caption{The table displays key notation used in the proposed batching procedures for performing uncertainty quantification using simulation output.}
\begin{tabularx}{\textwidth}{p{0.27\textwidth}X}
\toprule
  Object & Description \\
    \midrule
  $\theta \in \mathbb{R}^d$ & \emph{unknown} parameter, e.g., expected customer delay \\ 
  $Y_i \in \mathcal{Y}, \ i = 1,2,\ldots,n$ & \emph{known} output data obtained from simulation \\
  $P_n$ & \emph{known} empirical measure given the output data \\
  $\theta_n =\theta(P_n) \in \mathbb{R}^d$ & \emph{known} estimator of $\theta$ constructed from $Y_i, \ i=1,2,\ldots,n$, e.g., sample mean of customer delays\\
  $\varepsilon_n = \theta_n - \theta$ & \emph{unknown} error in the estimator $\theta_n$ with respect to $\theta$ \\
  $P_{\varepsilon_n}$ & \emph{unknown} distribution of error $\varepsilon_n$ \\
  $P_{\mbox{\tiny OB-I},\varepsilon_n}$, $P_{\mbox{\tiny OB-II},\varepsilon_n}$ & \emph{known} distributions constructed from batch estimates to approximate $P_{\varepsilon_n}$\\
  $\psi_{n} := \psi(P_{\varepsilon_n})$ & \emph{unknown} statistical functional of interest, e.g., variance of $\varepsilon_n$, bias of $\theta_n$ \\
  $\psi_{\mbox{\tiny OB-I},n} := \psi(P_{\mbox{\tiny OB-I},\varepsilon_n})$ & \emph{known} OB-I estimator of $\psi_n$ \\ $\psi_{\mbox{\tiny OB-II},n} := \psi(P_{\mbox{\tiny OB-II},\varepsilon_n})$ & \emph{known} OB-II estimator of $\psi_n$ \\
    \bottomrule
 \end{tabularx}
\end{table}

\begin{remark}[OB-I versus OB-II] Notice that OB-I uses $b_n$ batch estimates each computed with a batch of size $m_n$ along with a ``grand estimate center'' that is computed with the entire dataset of size $n$. The computational complexity of OB-I is thus $b_n c_0(m_n) + c_0(n)$, where $c_0(t)$ is the computational cost associated with constructing a point estimate using a batch of size $t$. OB-II is a computationally efficient alternative to OB-I that uses the average of individual batch estimates as centering, in the process avoiding the potentially large $c_0(n)$ term in OB-I's expression for computational complexity. In a nutshell, OB-II attempts to trade-off the quality of inference for computational efficiency, although a rigorous proof of OB-I's dominance in terms of the quality of inference is still open. \end{remark} 

\subsection{Related Terminology}
Inference in this paper pertains to simulation output uncertainty \emph{conditional on the given simulation model}. This is in contrast to another modern popular topic called simulation \emph{input uncertainty}~\citep{2003hen,2001chi,1997chehol,2012bar,2016lam}, which quantifies the effect of errors in the input distributions that form the primitives to the simulation. In effect, what we consider here provides a sense of how decision-making might be affected due to performing too few simulation runs, whereas input uncertainty deals with the corresponding effects due to a lack of adequate real-world data used when estimating the distributional input to the simulation.

Both output uncertainty and input uncertainty in simulation are subsumed by the recently phrased ``umbrella'' topic \emph{uncertainty quantification}~\citep{2021abdetal,2009naj,2017soi} which should be understood loosely as the effort to quantify the effect of all sources of error, e.g., input parameters, structure, logic, and solution, within models that include, but not limited to, simulation. Some examples of models other than simulation are stochastic differential equations~\citep{1986hoeporsto}, neural networks~\citep{2018botcurnoc}, and regression~\citep{2004was}. A number of other terms such as \emph{error estimation} and \emph{reliability estimation} have also come into use recently~\citep{2011barth} and should be carefully distinguished from what we see as the narrow topic of simulation output inference considered in this paper.

\subsection{Summary of Insight and Contribution} The context of this paper is the question of \emph{uncertainty quantification} on simulation output, that is, how to assess the error in a point estimator $\theta_n$ (of an unknown quantity $\theta$) that is constructed from simulation output data $(Y_1,Y_2,\ldots,Y_n)$. Uncertainty quantification on the error $\theta_n - \theta$ is essentially that of developing methods to approximate the unknown sampling distribution of $\theta_n - \theta$, using only $(Y_1,Y_2,\ldots,Y_n)$.

We present \emph{batching} as an omnibus principle for uncertainty quantification using simulation output. The key idea underlying batching is straightforward and stated in two steps: (i) create $b_n$ strategic sub-groups of the available data $(Y_1,Y_2,\ldots,Y_n)$ and construct a point estimate $\theta_{i,n}, i=1,2,\ldots,b_n$ from each sub-group; (ii) use the point estimates $\theta_{i,n}$ (and possibly the ``grand estimate'' $\theta_n$) to appropriately approximate the sampling distribution of $\theta_n - \theta$. Specific ways of performing (i) and (ii) result in the two methods OB-I and OB-II that we present in this paper. 

The qualitative contribution of this paper is the recognizing of batching as an omnibus inference device, that is, as a general principle for approximating the sampling distribution of $\theta_n-\theta$, thereby placing it alongside resampling ideas such as bootstrapping~\citep{1998efrtib,das2011,2012shatu} and subsampling~\citep{1999polromwol}. The quantitative contribution of this paper is fourfold.   \begin{enumerate} \item (strong consistency) We demonstrate that OB-I and OB-II are \emph{strongly consistent} in that they construct random measures which tend to the sampling distribution of $\theta_n - \theta$ almost surely in the supremum norm. \item (higher-order accuracy) We quantify the asymptotic distance between the sampling distribution of $\theta_n - \theta$ and the random measures constructed by OB-I and OB-II. In particular, OB-I and OB-II methods enjoy what has been called \emph{higher-order accuracy} --- OB-I and OB-II's random measures are closer to the sampling distribution of $\theta_n-\theta$ than the canonical distance to their common limit. 
\item (confidence regions) We characterize OB-I and OB-II distributions (along with percentile tables) that can be used to construct asymptotically valid confidence regions on the unknown parameter $\theta$.
\item (numerical experience) We provide a partial report on our now extensive numerical experience with uncertainty quantification using OB-I and OB-II. Such experience has been crucial for insight on selecting batch sizes and the extent of their overlap during implementation. We make OB-I and OB-II inference software codes publicly available.

\end{enumerate} 

Portions of our treatment of OB-I and OB-II have appeared elsewhere. For example, broad commentary on batching as an omnibus inference idea appear in~\citep{2023pasWSC,2023pasSW}, and specific  confidence region constructions with OB-I and OB-II appear in~\citep{2023suetal,2022passinyeh}. The principal content of this paper --- the formal treatment of OB-I and OB-II random measures as approximations of the desired sampling distribution --- has not appeared elsewhere. Likewise, Theorem~\ref{thm:batchstrcons},~Theorem~\ref{thm:hoa}, and Lemma~\ref{lem:bigbatch} on strong consistency and higher-order accuracy are new and have not appeared elsewhere.


\section{MATHEMATICAL PRELIMINARIES}

\label{sec:notation} In what follows, we list notation, define key ideas, standing assumptions invoked in the paper, and important results.
 
\subsection{Notation} (i) If $x \in \mathbb{R}^d$, then $x_j, j = 1,2,\ldots,d$ refers to the $j$-th coordinate of $x$, implying that $x= (x_1,x_2,\ldots,x_d)$. (ii) For $x,y \in \mathbb{R}$, $a \vee b := \max(a,b)$ refers to the maximum of $a$ and $b$, and $a \wedge b := \min(a,b)$ refers to the minimum of $a$ and $b$. (iii) $\|x\|_p := \left( \sum_{j=1}^d |x_j|^p\right)^{\frac{1}{p}}$ refers to the $L_p$ norm of $x \in \mathbb{R}^d$. As usual, we use the special notation $\|x\|$ for the $p=2$ case to refer to the $L_2$ norm of $x \in \mathbb{R}^d$. (iv) For a random sequence $\{X_n, n \geq 1\}$, we write $X_n \as X$ to refer to almost sure convergence (also known as convergence with probability one), $X_n \inP X$ to refer to convergence in probability, and $X_n \inD X$ to refer to convergence in distribution (also known as weak convergence).  (v) For positive valued sequences $\{a_n, n \geq 1\}$ and $\{b_n, n \geq 1\}$, we say $a_n \sim b_n$ to mean $a_n/b_n \to 1$ as $n \to \infty$. (vi) The Dirac measure $\delta_{x}$ is a probability measure on $\mathbb{R}^d$ (with any sigma algebra of subsets of $\mathbb{R}^d$) that satisfies $\delta_x(A) = 1$ if $x \in A$ and $0$ otherwise, for measurable sets $A$. (vii) For $t=(t_1,t_2,\ldots,t_d) \in \mathbb{R}^d,$ $(-\infty,t]$ refers to the ``southwest rectangle'' $(-\infty,t_1] \times (-\infty,t_2] \times \cdots (-\infty,t_d]$.

\subsection{Definitions}

\begin{definition}[Space $\mathcal{P}(\mathbb{R}^d)$ of Probability Measures]\label{defn:pbspace} The \emph{space of probability measures} $\mathcal{P}(\mathbb{R}^d)$ refers to the set of non-negative Borel measures $\mu$ on $\mathbb{R}^d$ such that $\mu(\mathbb{R}^d) = 1.$  

\end{definition}

\begin{definition}[Continuous Statistical Functional]\label{defn:contstatfn} A \emph{statistical functional} $\psi: \mathcal{P}(\mathbb{R}^d) \to \mathbb{R}$ is a real-valued map having domain $\mathcal{P}(\mathbb{R}^d)$, the space of probability measures on $\mathbb{R}^d$. The statistical functional $\psi$ is said to be \emph{continuous} at $\lambda_0 \in \mathcal{P}(\mathbb{R}^d)$ if $\|\lambda_n - \lambda_0\|_{\infty} \to 0 \,\, \Longrightarrow \,\, |\psi(\lambda_n) - \psi(\lambda_0)| \to 0,$ for any sequence $\{\lambda_n, n \geq 1\} \subset \mathcal{P}(\mathbb{R}^d)$. $\| \cdot \|_{\infty}$ refers to the uniform norm on $\mathcal{P}(\mathbb{R}^d)$: $$\|\lambda_1 - \lambda_2\|_{\infty} = \sup_{t \in \mathbb{R}^d} \bigg | \lambda_1(R_t) - \lambda_2(R_t)  \bigg |, \quad \lambda_1,\lambda_2 \in \mathcal{P}(\mathbb{R}^d),$$ where the rectangle $R_t := (-\infty,t_1] \times (-\infty,t_2] \times \cdots \times (-\infty, t_d],$ and $t = (t_1,t_2,\ldots,t_d).$   
\end{definition}

\begin{definition}[$d$-dimensional Wiener process, also called Brownian motion]\label{defn:brownian} Recall an $\mathbb{R}^d$-valued stochastic process $\mathbb{Y} = \{{Y}(t), t \in [0,\infty)\},$ defined on a space $\mathcal{P}(\mathbb{R}^d)$ of probability measures is called a \emph{Gaussian process}  if for any $0 \leq t_1 < t_2 < \cdots < t_n < +\infty$, $({Y}_i(t_1), {Y}_i(t_2), \ldots, {Y}_i(t_n))$ has a multivariate Gaussian distribution~\cite[pp. 4]{1980ser} for $i=1,2,\cdots,d$. The $d$-dimensional \emph{Wiener process} $\mathbb{W} = \{{W}(t), t \in [0,\infty)\} \subset \mathbb{R}^d$ is a special type of Gaussian process satisfying the following three properties:\begin{enumerate} \item[(a)] with probability one, the map $t \mapsto {W}(t)$ is continuous in $t$; \item[(b)] random vectors ${W}(t_2)-{W}(t_1), {W}(t_3) - {W}(t_2), \ldots, {W}(t_n)- {W}(t_{n-1})$ are independent for $0\leq t_1 < t_2 < \cdots < t_n < +\infty$; and \item[(c)] for $t,h \geq 0$ and fixed ${\mu} \in \mathbb{R}^d, \Sigma \in \mathbb{R}^{d \times d}$, ${W}(t+h) - {W}(t) \overset{\rm{d}}{=} N\left(h\mu,h\Sigma \Sigma^\intercal\right).$\end{enumerate} A $d$-dimensional Wiener process is called a \emph{standard Wiener process} if ${W}(0)=0, \mu=0$ and $\Sigma = I_d$. See Theorem 8.2 of \cite{1999bil} for more. 
\end{definition}

\subsection{Assumptions}

We state a number of key assumptions, some subset of which will be invoked by each result in the paper. The first three assumptions on stationarity, strong mixing, and CLT are standard assumptions. Further discussion on the stringency of the invoked assumptions appears in the commentary that follows each result.

\begin{assumption}[Stationarity]\label{ass:stationarity} The $\mathbb{R}^d$-valued sequence $\{Y_n, n \geq 1\}$ is stationary, that is, for any $n_j, j=1,2,\ldots,k < \infty$ and $k < \infty$, the distribution of $(Y_{n_1+\tau}, Y_{n_2+\tau}, Y_{n_3 + \tau}, \ldots,Y_{n_k+\tau})$ does not depend on $\tau \in \{1,2,\ldots \}$. 

\end{assumption}

\begin{assumption}[CLT]\label{ass:CLT}
The sequence $\{\theta_n, n \geq 1\}$ of estimators satisfies a central limit theorem, that is, \begin{equation}\label{clt} \sqrt{n}(\theta_n - \theta) \inD \sigma Z(0,\Sigma),\end{equation} where $Z(0,\Sigma)$ is normal random vector with mean zero and a covariance matrix $\Sigma$.
\end{assumption}

\begin{assumption}[Strong Mixing]\label{ass:phimixing} Suppose $\mathcal{G},\mathcal{H}$ are sub-$\sigma$-algebras of $\mathcal{F}$ in the probability space $(\Omega, \mathcal{F},P)$. Recall that the strong mixing constant $\alpha(\mathcal{G},\mathcal{H})$ is given by \begin{align} \alpha(\mathcal{G},\mathcal{H}) &= \sup_{A \in \mathcal{G}} \sup_{B \in \mathcal{H}} \left | P(AB) - P(A)P(B) \right| \nonumber \\ &= \frac{1}{2} \sup_{A \in \mathcal{G}} \mathbb{E}\left[ \left|P(A \vert \mathcal{H}) - P(A)\right|\right] \nonumber \\ &= \frac{1}{2} \sup_{A \in \mathcal{H}} \mathbb{E}\left[ \left|P(A \vert \mathcal{G}) - P(A)\right|\right].\end{align} We assume that the $S$-valued sequence $\{X_n, n \geq 1\}$ has strong mixing~\cite[pp. 347]{2009ethkur} constants $\alpha_n := \alpha(\mathcal{F}_{k},\mathcal{F}_{k,n})$ satisfying $\alpha_n \searrow 0$ as $n \to \infty$, where $\mathcal{F}_{k} : =\sigma(X_1,X_2,\ldots,X_k)$, $\mathcal{F}_{k,n} : =\sigma(X_{k+n},X_{k+n+1},\ldots)$ denote  sub-$\sigma$-algebras of $\mathcal{F}$ ``separated by $n$.''

\end{assumption}

\begin{assumption}[Strong Invariance]\label{ass:stronginvar} The sequence $\{\theta(P_n), n \geq 1\}$ of estimators satisfies the following strong invariance principle. On a rich enough 
probability space, there exists a constant $\sigma>0$, a standard Wiener process $\{W(t), t \geq 0\}$ and a stationary stochastic process $\{\tilde{Y}_n, n \geq 1\} \overset{d}{=} \{Y_n, n \geq 1\}$ such that as $n \to \infty$, \begin{equation}\label{inv} \sup_{0 \leq t \leq n} \left |\sigma^{-1}\left(\theta(P_{\lfloor t \rfloor}) - \theta(P)\right) - t^{-1}W(t)\right | \leq \Gamma \, n^{-1/2-\delta}\sqrt{\log n} \quad \emph{ a.s.},\end{equation} where the constant $\delta>0$ and the real-valued random variable $\Gamma$ satisfies $\mathbb{E}[\Gamma] < \infty$. \end{assumption}

Many of the results we present in this paper will not hold if the dependence within the underlying ``native time series'' $\{X_t, t \geq 1\}$ is too extreme. Assumption~\ref{ass:phimixing} is commonly used to exclude such contexts through an implicit comparison with a corresponding sequence consisting of iid random variables. Specifically, Assumption~\ref{ass:phimixing} states that events chosen from the sigma algebras $\mathcal{F}_k$ and $\mathcal{F}_{k,n}$ behave like independent events as the ``separation'' $n$ between the sigma algebras tends to infinity. 

Assumption~\ref{ass:stronginvar} on strong invariance is a statement about $\{\theta(P_n), n \geq 1\}$ ``looking like'' a Wiener process on a certain scaling. In effect, Assumption~\ref{ass:stronginvar} stipulates that the scaled process $\left\{\sqrt{n}\sigma^{-1}\left(\hat{\theta}_{\lfloor t \rfloor} - \theta(\tilde{P})\right), t \leq n\right\}$ can be approximated uniformly to within $n^{-\delta}$ \emph{almost surely}, by a suitable standard Wiener process on a rich enough probability space.  Assumption~\ref{ass:stronginvar} holds for a variety of weakly dependent processes as argued in~\cite{1975phisto}, and~\cite{1988glyigl}. Also see~\cite{1981csorev} for strong invariance theorems on partial sums, empirical processes, and quantile processes. 

The  value of $\delta >0$ in Assumption~\ref{ass:stronginvar} is largely dictated by the tail behavior of the marginal distribution associated with the underlying time series and the extent of the internal dependence. As noted in~\cite{2023suetal}, for instance, the Koml\'{o}s-Major-Tusn\'{a}dy approximation~\citep{1975kmt} implies that when $\{X_n, n \geq 1\}$ forms an independent and identically distributed (iid) real-valued sequence with $\mathbb{E}[|X_1|^p] < \infty$ for some $p > 2$, and $\hat{\theta}_n = n^{-1} \sum_{j=1}^n X_j$, Assumption~\ref{ass:stronginvar} holds with $\delta = \frac{1}{2} - \frac{1}{p}.$ For other examples where $\delta$ can be deduced, see~\cite{2014berliuwu} for the dependent context,~\cite{2015mer} for additive functionals of certain types of Markov chains, and~\cite{1996hesha} for M-estimators.

\begin{assumption}[Edgeworth Expansion, see~\cite{1992hall}]\label{ass:edgeworth} The sequence $\{\theta_n, n\geq 1\}$ satisfies an Edgeworth expansion \begin{align}\label{edge} P\bigg(\sqrt{n}\left(\theta_n - \theta\right) \leq x \bigg) &= \Phi(x) + n^{-1/2}p_1(x)\varphi(x) + n^{-1}p_2(x)\varphi(x) + \cdots \nonumber \\ & \hspace{1.8in} + n^{-j/2}p_j(x)\varphi(x) + \cdots, \end{align} where $p_j$ is a polynomial of degree no more than $3j-1$, odd for even $j$ and even for odd $j$. \end{assumption} It is thought that in a great many cases of practical interest, an Edgeworth expansion of the form~\eqref{edge} holds. See~\cite{1976rei} and~\cite[Chapter 2]{1992hall} for a treatment when $\theta_n$ is a sample quantile or a smooth function of a sample mean.

\subsection{Useful Results} We next state two useful results that will be invoked in the proofs of various results in the paper.

\begin{theorem}[Slutsky's Theorem,~\cite{1980ser}]\label{slutsky} Let $\{X_n, n \geq 1\}$ be a real-valued sequence such that $X_n\inD X$ where $X$ is a real-valued random variable. Also, let $\{Y_n, n \geq 1\}$ be a real-valued sequence such that $Y_n\inP c$ where $c$ is a constant. Then $X_n+Y_n\inD X+c$, $X_nY_n \inD cX$ and $X_n/Y_n \inD X/c$ if $c \neq 0$. 
\end{theorem}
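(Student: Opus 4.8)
The plan is to reduce all three claims to a single joint-convergence statement and then invoke the continuous mapping theorem. The central observation is that although marginal convergence in distribution does not in general imply joint convergence, the degeneracy of the limit of $\{Y_n, n\geq 1\}$ --- it converges in probability to the \emph{constant} $c$ --- is exactly what permits the upgrade. Concretely, I would first establish the vector statement
\[
(X_n, Y_n) \inD (X, c),
\]
and then read off the three conclusions by composing with continuous maps.

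To establish the joint convergence, I would proceed in two moves. First, the map $x \mapsto (x,c)$ from $\mathbb{R}$ to $\mathbb{R}^2$ is continuous, so $X_n \inD X$ together with the continuous mapping theorem gives $(X_n, c) \inD (X, c)$. Second, I would invoke the ``converging-together'' (asymptotic equivalence) lemma: if $U_n \inD U$ and $\|U_n - V_n\| \inP 0$, then $V_n \inD U$. Here I take $U_n = (X_n, c)$ and $V_n = (X_n, Y_n)$, so that $\|U_n - V_n\| = |Y_n - c| \inP 0$ by hypothesis, and the lemma then yields $(X_n, Y_n) \inD (X, c)$. The converging-together lemma itself I would prove from the Portmanteau characterization of weak convergence, bounding $|\mathbb{E}[f(V_n)] - \mathbb{E}[f(U_n)]|$ for bounded Lipschitz $f$ by splitting on the event $\{\|U_n - V_n\| > \epsilon\}$ and using uniform continuity of $f$; this is the one place where genuine $\epsilon$-work occurs.

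With joint convergence in hand, the three claims follow from the continuous mapping theorem applied to the maps $g_+(x,y) = x+y$, $g_\times(x,y) = xy$, and $g_\div(x,y) = x/y$. The first two are continuous on all of $\mathbb{R}^2$, so $X_n + Y_n = g_+(X_n,Y_n) \inD g_+(X,c) = X + c$ and $X_n Y_n = g_\times(X_n,Y_n) \inD g_\times(X,c) = cX$ are immediate. For the quotient, $g_\div$ is continuous everywhere except on the line $\{y = 0\}$; since the limit law of $(X_n,Y_n)$ is that of $(X,c)$ with $c \neq 0$, this discontinuity set has probability zero under the limit, so the continuous mapping theorem still applies and gives $X_n / Y_n \inD X/c$.

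The step I expect to be the main obstacle --- indeed the only nontrivial point --- is the passage from marginal to joint convergence via the converging-together lemma; everything else is a mechanical application of the continuous mapping theorem. As an alternative that avoids the vector machinery, one could argue directly on distribution functions: for the sum, at any continuity point $x$ of $F_{X+c}$ and any small $\epsilon>0$ chosen so that $x-c\pm\epsilon$ are continuity points of $F_X$, sandwich
\[
P(X_n \le x - c - \epsilon) - P(|Y_n - c| > \epsilon) \;\le\; P(X_n + Y_n \le x) \;\le\; P(X_n \le x - c + \epsilon) + P(|Y_n - c| > \epsilon),
\]
let $n \to \infty$ using $X_n \inD X$ and $Y_n \inP c$, and then let $\epsilon \downarrow 0$; analogous sandwichings dispatch the product and quotient. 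I would present the joint-convergence route as the main argument, since it handles the three cases uniformly and with minimal computation.
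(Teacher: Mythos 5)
The paper does not actually prove this statement: Theorem~\ref{slutsky} is quoted as a classical result with a citation to \cite{1980ser} and is invoked later only as a tool (in the proof of Theorem~\ref{thm:batchstrcons}), so there is no in-paper proof to compare against. Your argument is correct and is the canonical one --- upgrading the marginal hypotheses to joint convergence $(X_n,Y_n) \inD (X,c)$ via the converging-together lemma (proved by the bounded-Lipschitz Portmanteau estimate you describe), then applying the continuous mapping theorem to $x+y$, $xy$, and $x/y$, the last being legitimate because its discontinuity set $\{y=0\}$ is null under the limit law when $c \neq 0$ --- and your CDF-sandwich alternative is likewise a standard valid route.
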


\begin{theorem}[P\'{o}lya's Theorem, see~\cite{1980ser}]\label{polya}
Let $H_n, H, n=1,2,\ldots$ be cumulative distribution functions on $(\mathbb{R}^d,\mathfrak{B}(\mathbb{R}^d))$, and suppose that $H_n \inD H$ at $H$-continuity points, that is, for each $t \in \mathbb{R}^d$ where $H$ is continuous, $H_n(t) \to H(t)$ as $n \to \infty$. Then, the convergence is uniform, that is, $\sup_{t \in \mathbb{R}^d} |H_n(t) - H(t)| \to 0$ as $n \to \infty.$\end{theorem}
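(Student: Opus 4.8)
The plan is to reproduce the classical Pólya argument, adapted to $\mathbb{R}^d$: reduce the global supremum to a maximum over a \emph{finite grid} by exploiting the coordinate-wise monotonicity of cumulative distribution functions together with the uniform continuity of the limit $H$. Note at the outset that the conclusion genuinely requires $H$ to be continuous on all of $\mathbb{R}^d$; this is implicit in the hypothesis, since only then is every $t$ an $H$-continuity point, so that the pointwise convergence $H_n(t)\to H(t)$ is available at every $t$ (which is exactly how the theorem will be invoked later, against continuous Gaussian/Wiener limits). A continuous distribution function on $\mathbb{R}^d$ is automatically uniformly continuous, being monotone and bounded with $H(t)\to 1$ as $\min_j t_j\to+\infty$ and $H(t)\to 0$ as $\min_j t_j\to-\infty$; this uniform continuity is the engine of the proof.

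First I would fix $\epsilon>0$ and build a finite grid. Choosing $M>0$ large enough that $H$ is within $\epsilon$ of its limiting values outside the box $[-M,M]^d$ handles the tails, and partitioning each coordinate axis of $[-M,M]$ into finitely many subintervals $s_0<s_1<\cdots<s_K$ fine enough that, by uniform continuity, $H$ oscillates by at most $\epsilon$ across any resulting cell yields a finite set $G\subset\mathbb{R}^d$ of grid points (augmented by $\pm\infty$ in each coordinate to capture the tails). The grid is arranged so that every $t\in\mathbb{R}^d$ lies in some cell $[a,b]$ with $a,b\in G$, $a\le t\le b$ componentwise, and $0\le H(b)-H(a)\le c\,\epsilon$ for a dimensional constant $c$.

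Next comes the sandwiching step. Since any cdf is non-decreasing in each coordinate, $a\le t\le b$ gives $H_n(a)\le H_n(t)\le H_n(b)$ and $H(a)\le H(t)\le H(b)$. Subtracting yields
\[
H_n(t)-H(t)\le \big(H_n(b)-H(b)\big)+\big(H(b)-H(a)\big),
\]
and symmetrically $H_n(t)-H(t)\ge \big(H_n(a)-H(a)\big)-\big(H(b)-H(a)\big)$. Because $H(b)-H(a)\le c\,\epsilon$ and $G$ is finite, taking $n$ large enough that $|H_n(g)-H(g)|<\epsilon$ for each of the finitely many $g\in G$ (pointwise convergence, valid at every $g$ since $H$ is continuous) bounds both displayed sides by $(c+1)\epsilon$ \emph{uniformly} in $t$. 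Hence $\sup_{t}|H_n(t)-H(t)|\le (c+1)\epsilon$ for all large $n$, and letting $\epsilon\downarrow 0$ delivers the claim.

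The main obstacle is the $d$-dimensional bookkeeping rather than any deep analytic difficulty. Two accounting points need care. First, controlling $H(b)-H(a)$: in dimension one this is immediate from monotonicity, but for $d>1$ the difference of cdf values is \emph{not} the mass of the cell $[a,b]$, so I would telescope one coordinate at a time,
\[
H(b)-H(a)=\sum_{j=1}^{d}\big[H(b_1,\ldots,b_j,a_{j+1},\ldots,a_d)-H(b_1,\ldots,b_{j-1},a_j,\ldots,a_d)\big],
\]
and bound each summand by the uniform-continuity modulus, since consecutive arguments differ only in coordinate $j$ by at most the mesh size. Second, the tails: for cells touching $\pm\infty$ I must verify that both $H$ and $H_n$ are uniformly small (or uniformly close to $1$), using the monotone tail behavior and the choice of $M$, so that the finite-grid argument still applies there. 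Once these two issues are dispatched, the result follows from the same three-line sandwich as in the classical one-dimensional Pólya theorem.
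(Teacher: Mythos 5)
Since the paper states this result without proof (quoting it from Serfling, 1980), your proposal can only be judged on its own merits. Its skeleton --- finite grid, monotone sandwich $H_n(a)\leq H_n(t)\leq H_n(b)$, pointwise convergence at the finitely many grid points, small oscillation of $H$ over each cell --- is the standard P\'{o}lya argument and is sound, and you are right that continuity of $H$ everywhere is implicitly assumed. But two of your supporting claims fail exactly in the $d\geq 2$ bookkeeping that you yourself flag as the main obstacle. First, uniform continuity of $H$ does \emph{not} follow from ``monotone and bounded with $H\to 1$ as $\min_j t_j\to+\infty$ and $H\to 0$ as $\min_j t_j\to-\infty$'': the function $f(x,y)=h\big(\max(x,0)\max(y,0)\big)$, with $h$ continuous and strictly increasing from $0$ to $1$, has all of these properties yet oscillates along the hyperbolas $xy=c$ (compare $(n,1/n)$ and $(n,2/n)$), so it is not uniformly continuous. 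It is not a cdf --- its rectangle masses go negative --- and that is precisely the point: you must use the cdf structure. The correct route is to observe that continuity of $H$ forces continuity of each one-dimensional marginal $F_j$ (a jump of $F_j$ at $x_0$ would produce a jump of $H$ at $(x_0,y^*)$ for $y^*$ large), then control increments by your own telescoping device in the sharper form
\[
0\;\leq\; H(t)-H(s)\;\leq\;\sum_{j=1}^{d}\big(F_j(t_j)-F_j(s_j)\big),\qquad s\leq t,
\]
and invoke uniform continuity of each one-dimensional $F_j$.

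Second, your tail claim --- that $H$ is within $\epsilon$ of its limiting values outside $[-M,M]^d$ --- is false for $d\geq 2$: with $t_1$ large and $t_2$ fixed, $H(t)\approx F_2(t_2)$, which is near neither $0$ nor $1$. The repair, again via marginals, is to choose $M$ so that $\sum_j\big[1-F_j(M)+F_j(-M)\big]\leq\epsilon$ and to admit grid points with coordinates equal to $+\infty$; the oscillation of $H$ over a cell touching infinity in coordinate $j$ is then bounded by $F_j(\infty)-F_j(M)\leq\epsilon$, not by a modulus of continuity over an infinite mesh as your sketch suggests. This creates one further obligation you do not discharge: pointwise convergence $H_n(g)\to H(g)$ at grid points $g$ with infinite coordinates, i.e., convergence of the marginals of $H_n$ to those of $H$. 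It does hold --- convergence at all points of a continuous limit $H$ is weak convergence, coordinate projections are continuous so the marginals converge weakly, and the limit marginals are continuous --- but it must be argued, since the hypothesis gives convergence only at points of $\mathbb{R}^d$. With these two repairs your three-line sandwich goes through verbatim; without them, the proof as written is incomplete for $d\geq 2$ (it is fine for $d=1$).
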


\section{EXAMPLE SIMULATION SETTINGS}\label{sec:examples}
To provide the reader a diversity of contexts that come under the purview of the methods described in this paper, and to provide further clarity on notation, we present three example settings. In each case, we identify the unknown parameter $\theta$, the estimator $\theta_n$, and the dataset $(Y_1,Y_2,\ldots,Y_n)$.

\subsection*{Example I (Wait Time in a $G/G/1$ Queue)} Consider the $G/G/1$ queue where a single server serves customers arriving according to an arrival process with independent and identically distributed (iid) inter-arrival times having distribution $G_1$. Customers are served in the order in which they arrive after joining a queue having infinite capacity. Service times for customers are iid according to a distribution $G_2$. Suppose $G_1$, $G_2$, and the initial conditions are such that the system is at steady-state, that is, $Y_n \overset{\mbox{\scriptsize d}}{=} Y \,\, \forall n \geq 1,$ where $Y$, waiting time of an arbitrary customer, is a well-defined random variable having distribution $F_W$. Let $\theta = \min\{w: F_{Y}(y) \geq 0.90\}$ denote the $0.9$-quantile of $Y$, and suppose that $(Y_1,Y_2,\ldots,Y_n)$ are the observed waiting times of the first $n$ customers in the system, so that $$\theta_n := \min\left\{y: F_{n,Y}(y) \geq 0.90\right\}; \quad F_{n,Y}(y) := \frac{1}{n}\sum_{j=1}^n \mathbb{I}(Y \leq y).$$ As described in the introduction, a simulationist interested in uncertainty quantification on $\theta_n$ is essentially attempting to understand the sampling distribution of $\epsilon_n = \theta_n-\theta$. 

\subsection*{Example II (Time-Dependent Inventory Levels in a Supply Chain)}

\begin{figure}[h]
\centering
\includegraphics[trim={0cm 2cm 0cm 1cm},clip,width = 0.75\textwidth,angle=0]{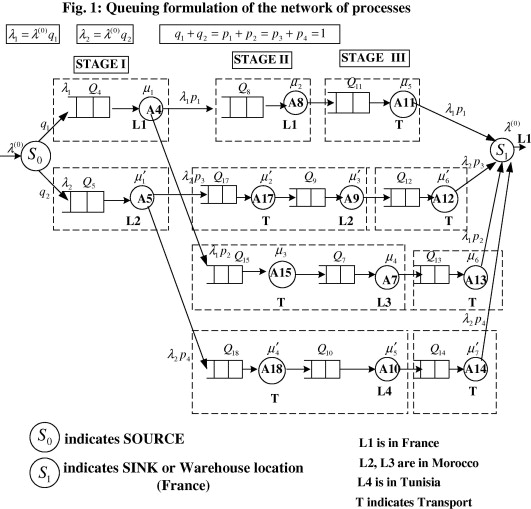}\caption{A queueing network model of a supply chain.}
\end{figure}
As a more elaborate example, consider the global supply chain introduced in the tutorial by~\cite{2014ing}, where the simulationist wishes to analyze the delivery of computing servers produced in Europe to the Asia-Pacific region, with the specific intention of evaluating whether it may be wise to move production to Singapore. Due to the complexity and scale of such a supply chain, it is easy to see why a simulation model would be helpful in answering many narrow questions, e.g., effect on inventory, effect on on-time delivery, effect on costs and revenue,  which together will be pertinent to the broader question of whether a move to Singapore is warranted.  

Consider one such narrow question, that of \emph{time-dependent inventory level}, that is, inventory as a function of time, at a specified location and observed over a horizon $[0,T$] of interest. The simulationist executes $n$ runs of the simulation, producing time-dependent inventory level $Y_{i}(t), t \in [0,T]$ during the $i$-th run. Importantly, notice that the $i$-th ``observation'' denoted $Y_i:= Y_{i}(t), t \in[0,T]$ is an entire function, in contrast with the previous example. As an aside, we note that we avoid the important and interesting question of how the infinite-dimensional object $Y_i$ should be stored in a digital computer, since it distracts from our main treatment. Suppose now that a simulationist who is especially interested in investigating low inventory levels chooses the parameter $\theta$ to be the $20$-th percentile inventory level as a function of time, that is, $\theta := \theta(t), t \in [0,T]$, where $\theta(t)$ is the $20$-th percentile inventory at time $t$. Thus, like the data $Y_i$, the parameter $\theta$ in this example is also function-valued. Recalling the ``dataset'' $(Y_1,Y_2,\ldots,Y_n)$ generated by $n$ runs of the simulation, an estimator $\theta_n : = {\theta}_n(t), t \in [0,T]$ of $\theta$ can then be constructed as:\begin{equation} {\theta}_n(t) := \min\left\{y: \frac{1}{n} \sum_{j=1}^n \mathbb{I}(Y_j(t)\leq y) \geq 0.2)\right\}, \quad t \in [0,T].\end{equation}  

The simulationist may have chosen a different parameter of interest, e.g., the mean vector of inventory levels at $d$ specific locations in the supply chain, at the fixed time instant $T$. In this case, denoting $\pi_{j,T}, j=1,2,\ldots,d$ as the inventory level distribution at time $T$ in location $j$, and denoting $Y_{i,j}(T), j=1,2,\ldots,d; i=1,2,\ldots, n$ as the $i$-th observed inventory level in location $j$ at time $T$, we can write: $$\theta := \left(\int_{\mathbb{R}} z \, \pi_{1,T}(\mbox{d}z), \int_{\mathbb{R}} z \, \pi_{2,T}(\mbox{d}z), \ldots, \int_{\mathbb{R}} z \, \pi_{d,T}(\mbox{d}z) \right).$$ Then, the estimator is ${\theta}_n := \left( \frac{1}{n}\sum_{i=1}^n Y_{i,1}(T),\frac{1}{n}\sum_{i=1}^n Y_{i,2}(T), \ldots, \frac{1}{n}\sum_{i=1}^n Y_{i,d}(T)\right).$

\subsection*{Example III (Nonlinear System of Equations)}  

Variable toll pricing has become a popular method to manage traffic on highways, by shifting purely discretionary traffic to off-peak hours or other roadways. Accordingly, a question of immense interest involves identifying the relationship between the toll price and the resulting congestion levels at steady state, toward better congestion pricing policies.

Suppose $p=(p_1,p_2,\ldots,p_d),\ p_i \in [0,M]$ represents the prevailing toll price for $d$ vehicle classes, and $\theta = \{(\theta_1(p), \theta_2(p), \ldots, \theta_d(p)),\ p \in [0,M]^d\}$ the corresponding expected steady state waiting time at the tolls for each of the $d$ classes given the toll prices specified by $p$. Given the complicated relationship between the expected wait time and the toll price, a simulation (whose mechanics are not relevant for our purposes) is used to estimate the parameter $\theta$. Suppose the simulation yields the output $(Y_1,Y_2, \ldots,Y_n)$, where $Y_i = (Y_{i1}(p), Y_{i2}(p), \ldots, Y_{id}(p)),\ p \in [0,M]^d$ represents the $i$-th realization of the wait time vector, that is, the vector wait times corresponding to the $i$-th vechicle in each of the $d$ classes, with $p$ held fixed. It is important to observe that each output observation $Y_i$ in this example is a \emph{random function or surface} of the toll price. A useful thought experiment that clarifies the nature of $Y_i$ is as follows. Fix and hold all ``random elements'' of the simulation while varying the toll price $p$ to form a time series of observations, each of which is a function of the price $p$.

Suppose the simulationist is interested in setting the tolls $p=(p_1,p_2, \ldots,p_d)$ so that the expected wait times for the $d$ classes matches target wait times $\gamma_1, \gamma_2, \ldots, \gamma_d$, respectively. Then the parameter $\theta$ is the solution (in $p$) to the following nonlinear system of equations: \begin{align}\label{nonlinsystem} \int_y y_j \pi_p (\mbox{d}y) &= \gamma_j, \quad j=1,2,\ldots,d. 
\end{align} Of course the solution $\theta$ to~\eqref{nonlinsystem} is unknown, but can be estimated as ${\theta}_n$ by solving the corresponding system  constructed using the data generated by simulation, that is, by solving the system: \begin{align}\label{nonlinsystemest} \frac{1}{n}\sum_{i=1}^n Y_{ij}(p) &= \gamma_j, \quad j=1,2,\ldots,d. 
\end{align} (There are existence and uniqueness issues pertaining to the solution of~\eqref{nonlinsystemest} but we omit discussion about such details here.) As in Example I and Example II, the inference question here is whether anything can be inferred about the nature of the error ${\theta}_n - \theta$.

\section{OB-I AND OB-II ESTIMATORS OF $\psi_n := \psi(P_{\varepsilon_n})$}\label{sec:estimatingpsi}
In this section, we present and analyze plug-in estimators $\hat{\psi}_{\mbox{\tiny OB-I},n}$, $\hat{\psi}_{\mbox{\tiny OB-II},n}$ of $\psi(P)$ constructed using the random measures $P_{\mbox{\tiny OB-I},\varepsilon_n}$ and $P_{\mbox{\tiny OB-II},\varepsilon_n}$. Recall that $\psi = (\psi_1,\psi_2,\ldots,\psi_p)$ is a vector of statistical functionals of interest, e.g., the expectation, variance, or simultaneous quantiles. Formally, \begin{align}\label{batchest}\hat{\psi}_{\mbox{\tiny OB-I},n} &:= \psi(P_{\mbox{\tiny OB-I},\varepsilon_n}) = \psi\left(\frac{1}{b_n}\sum_{i=1}^{b_n} \delta_{\varepsilon_{i,n}}\right);  \nonumber \\ \hat{\psi}_{\mbox{\tiny OB-II},n} &:= \psi(P_{\mbox{\tiny OB-II},\varepsilon_n}) = \psi\left(\frac{1}{b_n}\sum_{i=1}^{b_n} \delta_{\tilde{\varepsilon}_{i,n}}\right). \end{align} Table~\ref{table:psi} lists exact expressions for $\hat{\psi}_{\mbox{\tiny OB-I},n}$ and $\hat{\psi}_{\mbox{\tiny OB-II},n}$ (after appropriate rescaling) for some common choices of $\psi$. 

\begin{table}[htb]
\caption{The table displays OB-I and OB-II estimators for some commonly used assessment functionals $\psi$. Throughout the table, we use the notation $\varepsilon_{i,n} := \theta(P_{i,n}) - \theta_n, \tilde{\varepsilon}_{i,n} := \theta(P_{i,n}) - \bar{\theta}_n$ and $\bar{\varepsilon}_n := b_n^{-1}\sum_{i=1}^{b_n} \varepsilon_{i,n}$.  }
\begin{tabularx}{\textwidth}{p{0.1\textwidth}p{0.21\textwidth}p{0.32\textwidth}p{0.3\textwidth}}
\small

\toprule
 & Functional ($\psi$) & OB-I ($\hat{\psi}_{\mbox{\tiny OB-I},n}$) & OB-II ($\hat{\psi}_{\mbox{\tiny OB-II},n}$)\\ 
 \midrule
  bias & $\mathbb{E}[\varepsilon_n]$ & $\frac{1}{b_n}\sum_{i=1}^{b_n} \sqrt{\frac{m_n}{n}}\, \varepsilon_{i,n}$ & $\frac{1}{b_n}\sum_{i=1}^{b_n} \sqrt{\frac{m_n}{n}}\,\tilde{\varepsilon}_{i,n} = 0$ \\
  variance & $\mbox{Var}(\varepsilon_n)$ & $\frac{m_n}{n}\left(\frac{1}{b_n}\sum_{i=1}^{b_n} \varepsilon_{i,n}\varepsilon_{i,n}^\intercal - \bar{\varepsilon}_n\bar{\varepsilon}_n^\intercal\right)$ & $\frac{m_n}{n}\frac{1}{b_n}\sum_{i=1}^{b_n} \tilde{\varepsilon}_{i,n}\tilde{\varepsilon}_{i,n}^\intercal$ 
  \\
  quantiles & $\left(F_{j,n}^{-1}(\gamma_j),\ j=[d]\right)$ & $\left(\sqrt{\frac{m_n}{n}}\varepsilon_{\lfloor \gamma_jb_n\rfloor,j,n},\ j=[d]\right)$ & $\left(\sqrt{\frac{m_n}{n}}\tilde{\varepsilon}_{\lfloor \gamma_jb_n\rfloor,j,n},\ j=[d]\right)$
  \\
  \bottomrule
 \end{tabularx}\label{table:psi}
\end{table}

Our central question is whether $\hat{\psi}_{\mbox{\tiny OB-I},n}$ and $\hat{\psi}_{\mbox{\tiny OB-II},n}$ consistently estimate $\psi(P)$. We answer this question in a manner analogous to key results in the context of bootstrapping, e.g.,~\citep[Chapter 29]{das2011}. Specifically, Theorem~\ref{thm:batchstrcons} that follows guarantees the \emph{strong consistency} of $\hat{\psi}_{\mbox{\tiny OB-I},n}$ and $\hat{\psi}_{\mbox{\tiny OB-II},n}$, and the subsequent Theorem~\ref{thm:hoa} demonstrates that these estimators also enjoy a property that has been called higher-order accuracy in the context of bootstrapping~\citep{das2011}. 

\begin{theorem}[Strong Consistency of OB-I and OB-II]\label{thm:batchstrcons} Suppose Assumption~\ref{ass:stationarity} and~\ref{ass:CLT} hold, and Assumption~\ref{ass:phimixing} holds with strong-mixing constants $\alpha_n$ satisfying $\sum_{n=1}^{\infty} (n+1)\alpha_n^{\delta/(\delta+2)} < \infty$ for some $\delta>0$. Furthermore, the batch sizes $\{m_n, n \geq 1\}$ and number of batches $\{b_n, n \geq 1\}$ are such that as $n \to \infty$, \begin{equation} m_n \to \infty; \quad \frac{m_n}{n} \to 0; \tag{A.1}\end{equation} and \begin{equation} \sum_{n=1}^{\infty} b_n^{-2} < \infty. \tag{A.2}\end{equation} Then, for $\varepsilon_n := \theta_n - \theta(P)$, $\varepsilon_{i,n} := \theta(P_{i,n}) - \theta_n$, and $\tilde{\varepsilon}_{i,n} := \theta(P_{i,n}) - \bar{\theta}_n,$ as $n \to \infty$, \begin{equation}\label{strconob1} \sup_{t \in \mathbb{R}^d}\,\bigg | \frac{1}{b_n}\sum_{i=1}^{b_n} \mathbb{I}\bigg\{ \sqrt{m_n}\, \varepsilon_{i,n} \leq t \bigg\} - P\bigg(\sqrt{n} \, \varepsilon_n \leq t \bigg) \bigg | \as 0, \end{equation} and \begin{equation}\label{strconob2} \sup_{t \in \mathbb{R}^d}\,\bigg | \frac{1}{b_n}\sum_{i=1}^{b_n} \mathbb{I}\bigg\{ \sqrt{m_n}\, \tilde{\varepsilon}_{i,n} \leq t \bigg\} - P\bigg(\sqrt{n} \, \varepsilon_n \leq t \bigg) \bigg | \as 0. \end{equation} \qed
\end{theorem}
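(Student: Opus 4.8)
The plan is to prove \eqref{strconob1} and \eqref{strconob2} through a single common argument, since the OB-I and OB-II batch CDFs differ only through their centering ($\theta_n$ versus $\bar{\theta}_n$). The key reduction is to introduce the \emph{oracle} batch CDF centered at the true parameter,
\[
V_{b_n}(s) := \frac{1}{b_n}\sum_{i=1}^{b_n}\mathbb{I}\big\{\sqrt{m_n}\,(\theta(P_{i,n})-\theta)\le s\big\},\qquad s\in\mathbb{R}^d,
\]
and the Gaussian target $G(s):=\mathbb{P}(\sigma Z(0,\Sigma)\le s)$. By Assumption~\ref{ass:CLT} the deterministic quantity $\mathbb{P}(\sqrt{n}\,\varepsilon_n\le s)$ converges to $G(s)$ at every continuity point of $G$; since $G$ is continuous, Theorem~\ref{polya} upgrades this to uniform convergence. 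Hence it suffices to show $\sup_s|V_{b_n}(s)-G(s)|\as 0$ and to argue that the re-centering is asymptotically negligible, after which the triangle inequality closes both claims.

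For the oracle CDF I would split $V_{b_n}(s)-G(s)$ into a \emph{bias} part $\mathbb{E}[V_{b_n}(s)]-G(s)$ and a \emph{fluctuation} part $V_{b_n}(s)-\mathbb{E}[V_{b_n}(s)]$. For the bias, Assumption~\ref{ass:stationarity} makes every batch marginally identical, so $\mathbb{E}[V_{b_n}(s)]=\mathbb{P}\big(\sqrt{m_n}(\theta(P_{m_n})-\theta)\le s\big)$, which converges to $G(s)$ by Assumption~\ref{ass:CLT} (the batch has size $m_n\to\infty$ by (A.1)) and uniformly by Theorem~\ref{polya}. For the fluctuation, fix $s$ and set $Z_{i,n}:=\mathbb{I}\{\sqrt{m_n}(\theta(P_{i,n})-\theta)\le s\}-\mathbb{E}[\cdot]$, which is bounded and mean zero. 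The aim is a fourth-moment bound $\mathbb{E}\big[(b_n^{-1}\sum_{i=1}^{b_n}Z_{i,n})^4\big]\le C b_n^{-2}$ obtained from the moment/covariance inequalities for strong-mixing sequences, whose hypotheses are exactly what the assumed summability $\sum_n(n+1)\alpha_n^{\delta/(\delta+2)}<\infty$ guarantees. A Markov bound then gives $\mathbb{P}(|b_n^{-1}\sum_i Z_{i,n}|>\epsilon)\le C b_n^{-2}\epsilon^{-4}$, and condition (A.2), $\sum_n b_n^{-2}<\infty$, combined with Borel--Cantelli yields pointwise a.s.\ convergence. I would then upgrade pointwise to uniform convergence over $s$ via the classical Glivenko--Cantelli device: monotonicity of the CDFs together with continuity of the limit $G$ reduces the supremum to a finite grid.

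For the re-centering, note that for OB-I we have $\sqrt{m_n}\,\varepsilon_{i,n}=\sqrt{m_n}(\theta(P_{i,n})-\theta)-c_n$ with a common shift $c_n:=\sqrt{m_n}(\theta_n-\theta)$, so the OB-I CDF equals $V_{b_n}(\,\cdot+c_n)$. Writing $c_n=\sqrt{m_n/n}\,\cdot\sqrt{n}(\theta_n-\theta)$ and using (A.1), i.e.\ $m_n/n\to 0$, together with the a.s.\ consistency of $\theta_n$ (and a law of the iterated logarithm available under Assumption~\ref{ass:phimixing}), one gets $c_n\as 0$. Uniform continuity of $G$ then gives $\sup_s|G(s+c_n)-G(s)|\as 0$, which with the uniform a.s.\ convergence of $V_{b_n}$ establishes \eqref{strconob1}. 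The argument for \eqref{strconob2} is identical, with $c_n$ replaced by $\sqrt{m_n}(\bar{\theta}_n-\theta)\as 0$.

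The main obstacle is the fourth-moment (concentration) bound for the fluctuation, because overlapping batches are strongly dependent: batches $i$ and $j$ that share observations have $O(1)$ covariance, so the summands $Z_{i,n}$ are far from independent. The moment inequality must therefore be invoked with between-batch mixing coefficients governed by the native-series lag $((j-i)d_n-m_n)_+$, and the overlap geometry (there are on the order of $m_n/d_n$ mutually overlapping neighbors per batch) must be accounted for so that the bound survives at order $b_n^{-2}$ — equivalently, so that the effective number of weakly dependent blocks is of order $b_n$ up to constants absorbed by (A.2). Verifying this bound is the delicate step; by contrast, the Borel--Cantelli passage, the Pólya/Glivenko--Cantelli upgrade, and the re-centering are routine once it is in hand.
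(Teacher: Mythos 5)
Your scaffolding --- concentration of a batch empirical CDF around its mean via a fourth-moment mixing inequality, Markov plus Borel--Cantelli through (A.2), a Glivenko--Cantelli/P\'{o}lya upgrade to uniformity, and a final triangle inequality against the Gaussian limit --- is the same skeleton as the paper's proof, and your explicit concern about the overlap geometry in the moment bound is well placed (the paper dispatches exactly that step by citing Theorem~1 of Yoshihara (1978), whose hypothesis is the assumed summability $\sum_n (n+1)\alpha_n^{\delta/(\delta+2)}<\infty$). The genuine gap is in your re-centering step. By working with the oracle CDF $V_{b_n}$ centered at the true $\theta$ and writing the OB-I CDF as $V_{b_n}(\cdot + c_n)$ with $c_n = \sqrt{m_n}(\theta_n-\theta)$, you make the data-driven centering an explicit \emph{random} shift, and to keep the final statement almost sure you must show $c_n \as 0$. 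That is not available under the theorem's hypotheses: Assumption~\ref{ass:CLT} is only a distributional CLT, the theorem does not invoke the strong invariance principle (Assumption~\ref{ass:stronginvar}), and a law of the iterated logarithm for a general functional $\theta(P_n)$ (e.g., a quantile) does not follow from strong mixing of $\{Y_n\}$ alone. Worse, even granting an LIL, $c_n = \sqrt{m_n/n}\,\sqrt{n}(\theta_n - \theta) = O\big(\sqrt{(m_n/n)\log\log n}\big)$ a.s., and (A.1) does not force this to vanish: take $m_n = \lceil n/\log\log n\rceil$, which satisfies (A.1) while $(m_n/n)\log\log n \to 1$. As written, your route delivers convergence \emph{in probability} of the sup-norm, not the claimed almost sure convergence; the same defect hits your OB-II step through $\sqrt{m_n}(\bar{\theta}_n - \theta)$, where in addition the scaled batch bias $\sqrt{m_n}\big(\mathbb{E}[\theta(P_{1,n})]-\theta\big)$ is not controlled by a CLT at all.

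The paper avoids this trap by never externalizing the centering: it works directly with $\bar{M}_n(t) = b_n^{-1}\sum_{i=1}^{b_n}\mathbb{I}\{\sqrt{m_n}(\theta(P_{i,n}) - \theta(P_n)) \leq t\}$, whose mean $\mu_n(t) = P\big(\sqrt{m_n}(\theta(P_{1,n}) - \theta(P_n)) \leq t\big)$ is \emph{deterministic} even though the centering $\theta(P_n)$ is random. Almost-sureness then comes entirely from the concentration-plus-Borel--Cantelli step applied to $\bar{M}_n(t)-\mu_n(t)$, while the random centering is disposed of inside $\mu_n$ at the level of distributions via Slutsky (Theorem~\ref{slutsky}), using only $\sqrt{m_n}\big(\theta(P)-\theta(P_n)\big) = \sqrt{m_n/n}\cdot\sqrt{n}\big(\theta(P)-\theta(P_n)\big) \inP 0$, which Assumption~\ref{ass:CLT} and (A.1) do supply. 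Note the trade-off: your oracle centering makes the mixing step cleaner (each summand is a function of a single batch), whereas the paper's centering makes the drift trivial but feeds summands sharing the common $\theta(P_n)$ into Yoshihara's inequality. To repair your proof, run your moment bound on the indicators already centered at $\theta_n$ (resp.\ $\bar{\theta}_n$) and handle the centering inside the deterministic mean as above; your bias term, the P\'{o}lya upgrade, and the final triangle inequality then go through unchanged.
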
 We make a number of observations pertaining to Theorem~\ref{thm:batchstrcons} before providing a formal proof. \begin{enumerate} \item To see that Theorem~\ref{thm:batchstrcons} is indeed a strong consistency result on $P_{\mbox{\tiny OB-I},\varepsilon_n}$ and $P_{\mbox{\tiny OB-II},\varepsilon_n}$, notice that \begin{align*}P_{\mbox{\tiny OB-I},\varepsilon_n}\left((-\infty,\frac{t}{\sqrt{m_n}}]\right) &= \frac{1}{b_n}\sum_{i=1}^{b_n} \mathbb{I}\left\{ \sqrt{m_n}\, \varepsilon_{i,n} \leq t \right\}; \\ P_{\mbox{\tiny OB-II},\varepsilon_n}\left((-\infty,\frac{t}{\sqrt{m_n}}]\right) &= \frac{1}{b_n}\sum_{i=1}^{b_n} \mathbb{I}\left\{ \sqrt{m_n}\, \tilde{\varepsilon}_{i,n} \leq t \right\}. \end{align*} \item In~\eqref{strconob1} and~\eqref{strconob2},  $\varepsilon_{i,n}$ and $\tilde{\varepsilon}_{i,n}$ are each scaled-up by $\sqrt{m_n}$ whereas $\varepsilon_n$ is scaled up by $\sqrt{n}$. Such rescaling is necessary due to the ``small batch size'' condition in (A.1) which stipulates that $m_n$ should diverge slower than $n$. This implies that obtaining OB-I and OB-II estimates on $\psi(P_{\varepsilon_n})$ can be done by applying the same statistical functional to the distribution of $\sqrt{\frac{m_n}{n}}\,\varepsilon_{i,n},\ i=1,2,\ldots,b_n$ and $\sqrt{\frac{m_n}{n}}\,\tilde{\varepsilon}_{i,n},\ i=1,2,\ldots,b_n$, respectively. That we require $\frac{m_n}{n}\to 0$ for consistency also implicitly conveys that the rate at which the $\varepsilon_{i,n}$'s decay with $n$ is slower than the rate at which the estimator-errors $\varepsilon_n$ decay with $n$. \item The condition in (A.1) stipulates ``small batch sizes,'' that is, $\frac{m_n}{n} \to 0$ and $m_n \to \infty$, e.g., $m_n= n^{0.4}.$ The condition in (A.2) stipulates that $\sum_{n} b_n^{-2} < \infty$, e.g., $b_n = n^{0.6}.$ These conditions still leave many possible choices for $m_n$ and $b_n$. Are some of these better than others and if so, in what sense? It turns out that answering this question involves analyzing the mean squared error of $\hat{\psi}_{\mbox{\tiny OB-I},n}$ and $\hat{\psi}_{\mbox{\tiny OB-II},n}$ with respect to $\psi$, which in turn is a function of the extent of nonlinearity of both $\psi$ and $\theta$ as functionals. We do not go into further detail on this issue. \item The conditions (A.1) and (A.2) collectively encode the extent to which batches can be overlapped without relinquishing strong consistency. In particular, if the batches are overlapped too much, the condition $\sum_{n=1}^{\infty} (n+1)\alpha_n^{\delta/(\delta+2)} < \infty$ imposed on the strong-mixing constants $\alpha_n, n \geq 1$ may be violated. \end{enumerate}

A simple corollary of Theorem~\ref{thm:batchstrcons}, stated next without proof, is that if $\psi$ is a continuous functional, then the estimator $\hat{\psi}_n$ inherits strong consistency. 
\begin{corollary}[] Suppose $\psi: \mathcal{P}(\mathbb{R}^d) \to \mathbb{R}$ is continuous at $P$. (See Definition~\ref{defn:contstatfn}.) Then, under the postulates of Theorem~\ref{thm:batchstrcons}, as $n \to \infty$, \begin{equation} \left|\hat{\psi}_{\mathrm{\tiny OB-I},n} - \psi(P) \right | \as 0 \mbox{ and } \left|\hat{\psi}_{\mathrm{\tiny OB-II},n} - \psi(P) \right | \as 0 .\end{equation} \qed
\end{corollary}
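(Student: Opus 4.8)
The plan is to reduce this corollary to a single application of the continuity hypothesis, after first upgrading Theorem~\ref{thm:batchstrcons} so that the batch measures converge almost surely, in the uniform norm of Definition~\ref{defn:contstatfn}, to one \emph{fixed} probability measure. The point to notice at the outset is that Theorem~\ref{thm:batchstrcons} alone is not enough: \eqref{strconob1}--\eqref{strconob2} compare the batch empirical cdf against $P(\sqrt{n}\,\varepsilon_n\le\cdot)$, an $n$-dependent (moving) reference, whereas Definition~\ref{defn:contstatfn} can only be invoked against a single limit $\lambda_0$. So the first real step is to pin down that limit.

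For this I would invoke Assumption~\ref{ass:CLT}. Let $G$ denote the Gaussian law $\sigma Z(0,\Sigma)$ of~\eqref{clt}; since $\varepsilon_n=\theta_n-\theta$, the CLT gives $\sqrt{n}\,\varepsilon_n \inD G$, i.e. $P(\sqrt{n}\,\varepsilon_n\le t)\to G((-\infty,t])$ at every $t$ because $G$ has a continuous cdf. P\'{o}lya's Theorem~\ref{polya} then upgrades this pointwise convergence to uniform convergence, $\sup_t |P(\sqrt{n}\,\varepsilon_n\le t)-G((-\infty,t])|\to 0$. Writing $\mu_n$ for the (random) probability measure whose cdf is the first term in~\eqref{strconob1} --- the law of $\sqrt{m_n}\,\varepsilon_{i,n}$ under uniform sampling of $i\in\{1,\dots,b_n\}$, so that under the standardization in force throughout Section~\ref{sec:estimatingpsi} (Remark~\ref{rem:scale}) one has $\hat{\psi}_{\mbox{\tiny OB-I},n}=\psi(\mu_n)$ --- a triangle inequality combining the uniform CLT limit with~\eqref{strconob1} yields $\|\mu_n-G\|_\infty\as 0$. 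The identical argument with~\eqref{strconob2} gives the OB-II measure converging to the same $G$.

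With the fixed limit $G$ in hand, I would finish by transferring continuity sample path by sample path. On the probability-one event where $\|\mu_n-G\|_\infty\to 0$, the sequence $\{\mu_n\}$ is, for each fixed such $\omega$, an ordinary deterministic sequence in $\mathcal{P}(\mathbb{R}^d)$ converging to $G$ in $\|\cdot\|_\infty$; continuity of $\psi$ at $G$ (the measure playing the role of the corollary's ``$P$'') then gives $\psi(\mu_n)\to\psi(G)$ along that path. As this holds off a null set, $|\hat{\psi}_{\mbox{\tiny OB-I},n}-\psi(G)|\as 0$, and likewise $|\hat{\psi}_{\mbox{\tiny OB-II},n}-\psi(G)|\as 0$.

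The hard part is conceptual rather than computational: Theorem~\ref{thm:batchstrcons} delivers convergence only to the $n$-dependent distribution $P(\sqrt{n}\,\varepsilon_n\le\cdot)$, against which the sequential-continuity definition is inapplicable, so the CLT-plus-P\'{o}lya step that anchors the batch measures to the single Gaussian limit $G$ is doing the essential work. The remaining care points are purely bookkeeping --- being explicit that $\psi$ is evaluated at the $\sqrt{m_n}$-standardized batch measure $\mu_n$ consistent with Remark~\ref{rem:scale} and Table~\ref{table:psi}, and observing that almost-sure uniform convergence of the random measures $\mu_n$ passes through the sequentially continuous functional $\psi$ pathwise to produce almost-sure convergence of the real-valued estimators.
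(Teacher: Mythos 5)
Your proof is correct and is essentially the argument the paper intends: the corollary is stated there without proof, but the two ingredients you assemble --- almost-sure uniform convergence of the scaled batch measures to the fixed Gaussian limit, and uniform convergence of the law of $\sqrt{n}\,\varepsilon_n$ to that same limit --- are precisely the displays \eqref{polya1} and \eqref{polya2} already established inside the paper's proof of Theorem~\ref{thm:batchstrcons}, so your CLT-plus-P\'{o}lya step re-derives \eqref{polya2} and your triangle inequality is \eqref{check0} run in reverse. Your observation that the theorem's statement alone (with its $n$-dependent reference distribution) does not mesh with the sequential-continuity Definition~\ref{defn:contstatfn}, so that one must anchor to the single Gaussian limit (the measure at which the corollary's ``continuity at $P$'' must be read), is accurate and is the only nontrivial point in the proof.
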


We emphasize that the ``small batch'' condition stipulated by the second part of (A.1), i.e., $\frac{m_n}{n} \to 0$, is important for \emph{strong} consistency. In evidence, the following result asserts that even \emph{basic} consistency is lost if ``large batches'' are used. 

\begin{lemma}[Inconsistency of Large Batches]\label{lem:bigbatch} Suppose Assumption~\ref{ass:stationarity} and~\ref{ass:stronginvar} hold, and the batch sizes $\{m_n, n \geq 1\}$ and number of batches $\{b_n, n \geq 1\}$ are such that as $n \to \infty$, \begin{equation} \frac{m_n}{n} \to \beta > 0; \tag{A.3}\end{equation} and \begin{equation} b_n  \to \infty. \tag{A.4}\end{equation} Then as $n \to \infty$,
\begin{equation}\label{strcon} \sup_{t \in \mathbb{R}^d}\,\bigg | \frac{1}{b_n}\sum_{i=1}^{b_n} \mathbb{I}\bigg\{ \sqrt{m_n}\, \varepsilon_{i,n} \leq t \bigg\} - P\bigg(\sqrt{\Sigma} \, \tilde{B}(\beta) \leq t \bigg) \bigg | \as 0, \end{equation} where $\tilde{B}(\beta) := \frac{1}{\sqrt{\beta}} \frac{1}{1-\beta} \left( \int_{1-\beta}^1 W(u)\, \mathrm{d}u - \int_{0}^{\beta} W(u)\, \mathrm{d}u  - \beta (1-\beta)W(1)\right),$ and $\varepsilon_{i,n} = \theta(P_{i,n}) - \theta_n$. \qed
\end{lemma}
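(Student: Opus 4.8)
The plan is to reduce the batch errors to functionals of a single Wiener process via the strong invariance principle (Assumption~\ref{ass:stronginvar}), rescale time by $n$, and pass to the large-batch limit $m_n/n\to\beta$. I would fix the probability-one event on which~\eqref{inv} holds and, for the $i$-th batch, denote its starting index by $a_i:=(i-1)d_n$. Writing the windowed estimator error as a difference of two cumulative errors---immediate when $\theta$ is a sample mean, and in general through the influence-function linearization that underlies~\eqref{inv}---lets me apply the cumulative bound at each endpoint to obtain
\[
\sqrt{m_n}\,\big(\theta(P_{i,n})-\theta(P)\big)=\frac{\sigma}{\sqrt{m_n}}\big(W(a_i+m_n)-W(a_i)\big)+R_{i,n},
\]
where $R_{i,n}$ is dominated by the right-hand side of~\eqref{inv}; analogously $\sqrt{m_n}\,(\theta_n-\theta(P))=(\sigma\sqrt{m_n}/n)\,W(n)+R_{0,n}$. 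Subtracting yields a Wiener-increment representation of $\sqrt{m_n}\,\varepsilon_{i,n}$.

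I would then invoke Brownian scaling: $\tilde W(u):=n^{-1/2}W(nu)$ is again a standard Wiener process, so with $u_i:=a_i/n$ and $m_n/n\to\beta$ one gets, uniformly in $i$,
\[
\sqrt{m_n}\,\varepsilon_{i,n}=\frac{\sigma}{\sqrt{\beta}}\big(\tilde W(u_i+\beta)-\tilde W(u_i)\big)-\sigma\sqrt{\beta}\,\tilde W(1)+o(1)\quad\text{a.s.}
\]
Since $b_n\to\infty$ forces $d_n/n\to 0$, the offsets $u_i$ become dense in $[0,1-\beta]$, so the empirical proportions $b_n^{-1}\sum_i\mathbb{I}\{\sqrt{m_n}\varepsilon_{i,n}\le t\}$ are driven by the a.s. continuous Gaussian map $u\mapsto B(\beta,u):=\sigma\beta^{-1/2}(\tilde W(u+\beta)-\tilde W(u))-\sigma\sqrt\beta\,\tilde W(1)$ on $[0,1-\beta]$, and a Riemann-sum together with a Glivenko--Cantelli argument identifies the pointwise-in-$t$ limit of these proportions. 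The closed form $\tilde B(\beta)$ then emerges from a direct computation: interchanging the $u$-integration with the Wiener integral and using $\int_0^{1-\beta}\tilde W(u+\beta)\,du=\int_\beta^1\tilde W$ with $\int_\beta^1-\int_0^{1-\beta}=\int_{1-\beta}^1-\int_0^\beta$ reproduces exactly the three terms defining $\tilde B(\beta)$, while the covariance $\Sigma$ replaces the scalar $\sigma^2$ once the vector-valued forms of~\eqref{inv} and Assumption~\ref{ass:CLT} are used. A final appeal to P\'olya's theorem (Theorem~\ref{polya}) upgrades the convergence from pointwise in $t$ to uniform in $t$, giving~\eqref{strcon}.

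The main obstacle, and the reason this regime is \emph{inconsistent} rather than consistent, is that large batches overlap in a $\Theta(1)$ proportion, so the limiting batch errors $B(\beta,u_i)$ remain strongly dependent---all are functionals of one frozen Wiener path---and their dependence does not wash out as in the small-batch regime of Theorem~\ref{thm:batchstrcons}. Two technical points require care. First, the bound in~\eqref{inv} governs a single cumulative run up to $n$, so I must control the remainders $R_{i,n}$ simultaneously across all $b_n$ windows; stationarity (Assumption~\ref{ass:stationarity}) is what lets me re-anchor each window and bound $\max_i\|R_{i,n}\|$ uniformly. Second, passing from the discrete average over offsets to the continuum limit must be justified with the a.s. modulus of continuity of $\tilde W$, so that the indicator sums converge at every $t$ where the limiting distribution is continuous; this continuum passage, rather than the algebra producing $\tilde B(\beta)$, is where the real work lies.
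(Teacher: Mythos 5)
Your reduction of $\sqrt{m_n}\,\varepsilon_{i,n}$ to Wiener increments via Assumption~\ref{ass:stronginvar} and Brownian rescaling is sound (and the remainder control is even easier than you suggest: in the large-batch regime $m_n \sim \beta n$, differencing the cumulative bound~\eqref{inv} at the two endpoints already gives $\max_i \|R_{i,n}\| = \mathcal{O}(n^{-\delta}\sqrt{\log n})$ without re-anchoring). The genuine gap is at the identification step, exactly where you declare the algebra routine. Along your frozen-path argument, the Riemann-sum limit of $b_n^{-1}\sum_i \mathbb{I}\{\sqrt{m_n}\,\varepsilon_{i,n}\leq t\}$ is the occupation functional $\frac{1}{1-\beta}\int_0^{1-\beta}\mathbb{I}\{\sqrt{\Sigma}\,B(\beta,u)\leq t\}\,\mathrm{d}u$, a \emph{random} distribution function; no Glivenko--Cantelli argument applies across $i$, for precisely the reason you yourself state --- all summands are functionals of one Wiener path, and their dependence does not wash out. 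Your ``direct computation'' interchanging the $u$-integration with the Wiener integral establishes the identity $\frac{1}{1-\beta}\int_0^{1-\beta}B(\beta,u)\,\mathrm{d}u = \tilde{B}(\beta)$, which identifies the limit of the \emph{averaged} batch error $b_n^{-1}\sum_i \sqrt{m_n}\,\varepsilon_{i,n}$ (the quantity behind the OB-I bias functional in Table~\ref{table:psi}), not the limit of the empirical CDF of the batch errors: $\int \mathbb{I}\{B(\beta,u)\leq t\}\,\mathrm{d}u$ is not $\mathbb{I}\{\int B(\beta,u)\,\mathrm{d}u \leq t\}$, and averaging indicators over $u$ on a single path does not produce the deterministic probability $P(\sqrt{\Sigma}\,\tilde{B}(\beta)\leq t)$.

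Moreover, your route cannot be patched to yield~\eqref{strcon} as you have framed it. By exact Brownian scaling, $\tilde{W}_n(u):=n^{-1/2}W(nu)$ is a standard Wiener process for \emph{every} $n$, so your occupation functional has the same law for all $n$, and that law is nondegenerate at fixed $t$: even for disjoint windows $|u-v|\geq\beta$, one has $\mathrm{Cov}\big(B(\beta,u),B(\beta,v)\big)=-\beta\neq 0$ through the shared $-\sqrt{\beta}\,W(1)$ term. A sequence of identically distributed nondegenerate random variables cannot converge almost surely (or in probability) to a constant, so within your representation the claimed a.s. convergence to a deterministic CDF would require an additional mixing-across-scales or ergodicity argument that you do not supply --- and which fails pointwise along the scaling flow, since the rescaled path $\tilde{W}_n$ changes with $n$ rather than freezing. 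Note that the paper states this lemma with a terminal \qed and provides no proof, so there is no in-paper argument to reconcile yours with; but as written, and modulo the linearization caveat you acknowledge for non-sample-mean $\theta$ (Assumption~\ref{ass:stronginvar} controls only cumulative estimators), what your proposal actually delivers is the weak limit of the OB-I \emph{average} error involving $\tilde{B}(\beta)$ --- a different and strictly weaker statement than the uniform convergence of the empirical error distribution asserted in~\eqref{strcon}.
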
 


Lemma~\ref{lem:bigbatch} implies a negative statement on consistency. To see why, notice that Assumption~\ref{ass:CLT} asserts that as $n \to \infty$, $P(\sqrt{n} \, \varepsilon_n \leq t) \to P (\sqrt{\Sigma}\, W(1) \leq t)$. However, by Lemma~\ref{lem:bigbatch}, $P_{\mbox{\tiny OB-I},\varepsilon_n}((-\infty,\frac{t}{\sqrt{m_n}}]) \to P(\sqrt{\Sigma}\, \tilde{B}(\beta) \leq t) \neq P (\sqrt{\Sigma}\, W(1) \leq t)$ if big batches, i.e., $\frac{m_n}{n} \to \beta>0$, are used. We will, however, later argue that big batches are desirable (and even crucial) when constructing confidence regions on $\theta(P)$.

Can anything be said about the rate at which $P_{\mbox{\tiny OB-I},\varepsilon_n}$ and $P_{\mbox{\tiny OB-II},\varepsilon_n}$ converge to $P_{\varepsilon_n}$? We answer this question using a result analogous to what has been called ``higher-order accuracy''~\cite{} in the context of the bootstrap.

\begin{theorem}[Higher-order Accuracy]\label{thm:hoa} Suppose Assumption~\ref{ass:stationarity} and Assumption~\ref{ass:edgeworth} hold. Suppose also that the condition (A.1) (on the batch size sequence $\{m_n, n \geq 1\}$) appearing in Theorem~\ref{thm:batchstrcons} holds. If the number of batches $\{b_n, n \geq 1 \}$ satisfies \begin{equation} \frac{b_n}{m_n} \,\, \to \,\, \infty, \tag{A.3}\end{equation} then, for $\varepsilon_n := \theta_n - \theta(P)$, $\varepsilon_{i,n} := \theta(P_{i,n}) - \theta_n$, and $\tilde{\varepsilon}_{i,n} := \theta(P_{i,n}) - \bar{\theta}_n,$ as $n \to \infty$, \begin{equation}\label{KS1} m_n^{1/2- \delta}\, \sup_{t \in \mathbb{R}^d}\,\bigg | \frac{1}{b_n}\sum_{i=1}^{b_n} \mathbb{I}\bigg\{ \sqrt{m_n}\, \varepsilon_{i,n} \leq t \bigg\} - P\bigg(\sqrt{n} \, \varepsilon_n \leq t \bigg) \bigg | \inP 0, \end{equation} and \begin{equation}\label{KS2} m_n^{1/2- \delta}\, \sup_{t \in \mathbb{R}^d}\,\bigg | \frac{1}{b_n}\sum_{i=1}^{b_n} \mathbb{I}\bigg\{ \sqrt{m_n}\, \tilde{\varepsilon}_{i,n} \leq t \bigg\} - P\bigg(\sqrt{n} \, \varepsilon_n \leq t \bigg) \bigg | \inP 0. \end{equation}
\end{theorem}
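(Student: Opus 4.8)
The plan is to compare the OB-I empirical CDF $\hat{F}_n(t) := \frac{1}{b_n}\sum_{i=1}^{b_n}\mathbb{I}\{\sqrt{m_n}\,\varepsilon_{i,n}\le t\}$ against the target $G_n(t) := P(\sqrt{n}\,\varepsilon_n \le t)$ through the deterministic ``population batch'' CDF $H_{m_n}(t) := P(\sqrt{m_n}(\theta(P_{1,n}) - \theta)\le t)$, the law of a single batch statistic centered at the \emph{truth}. Writing $\Delta_n := \sqrt{m_n}(\theta_n - \theta)$ and using $\sqrt{m_n}\varepsilon_{i,n} = \sqrt{m_n}(\theta(P_{i,n})-\theta) - \Delta_n$, I would split, for every $t$,
\[
\hat{F}_n(t) - G_n(t) = \underbrace{\big[\hat{F}_n(t) - H_{m_n}(t+\Delta_n)\big]}_{\text{(fluctuation)}} + \underbrace{\big[H_{m_n}(t+\Delta_n) - H_{m_n}(t)\big]}_{\text{(centering shift)}} + \underbrace{\big[H_{m_n}(t) - G_n(t)\big]}_{\text{(Edgeworth bias)}},
\]
then take $\sup_t$, multiply by $m_n^{1/2-\delta}$, and bound the three pieces separately. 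Note that Theorem~\ref{thm:batchstrcons} already gives the qualitative vanishing of $\sup_t|\hat F_n - G_n|$; the task here is to produce a genuine polynomial \emph{rate}.

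The Edgeworth-bias term is the heart of the higher-order phenomenon and is purely deterministic. Subtracting the expansion of Assumption~\ref{ass:edgeworth} at sample size $m_n$ from the one at sample size $n$, the common leading terms $\Phi(t)$ cancel, leaving $H_{m_n}(t) - G_n(t) = (m_n^{-1/2}-n^{-1/2})\,p_1(t)\varphi(t) + O(m_n^{-1})$; since $m_n/n\to 0$ by (A.1) this is $O(m_n^{-1/2})$ uniformly in $t$, so multiplication by $m_n^{1/2-\delta}$ yields $O(m_n^{-\delta})\to 0$. The centering-shift term exploits that $H_{m_n}$ has a density converging to $\varphi$, hence bounded uniformly in $n$, so $H_{m_n}$ is Lipschitz with an $O(1)$ constant and the term is $O(|\Delta_n|)$; the CLT embedded in Assumption~\ref{ass:edgeworth} gives $\sqrt{n}(\theta_n-\theta)=O_P(1)$, whence $\Delta_n=\sqrt{m_n/n}\cdot O_P(1)=O_P(\sqrt{m_n/n})$, and the scaled contribution is $O_P(m_n^{1-\delta}n^{-1/2})$.

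The fluctuation term is the main obstacle and is where a sharp rate must be earned. I would fix $t$, set $\hat{H}_n^{\theta}(s) := \frac{1}{b_n}\sum_i \mathbb{I}\{\sqrt{m_n}(\theta(P_{i,n})-\theta)\le s\}$ so that $\hat F_n(t)=\hat H_n^\theta(t+\Delta_n)$, and bound $\mathrm{Var}(\hat H_n^{\theta}(s))$ by summing the covariances of the batch indicators. Because adjacent batches overlap and the native series is weakly dependent (as is implicit in the Edgeworth setting), only $O(m_n/d_n)$ neighbours of each batch contribute non-negligibly, and with $b_n d_n \sim n$ and (A.3) ensuring enough batches to swamp the sampling term, this yields $\mathrm{Var}(\hat H_n^{\theta}(s)) = O(m_n/n)$ uniformly in $s$ --- the intuitive ``$n/m_n$ effective independent blocks'' rate. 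A grid argument over $O(\mathrm{poly}(m_n))$ points, using the $O(1)$-Lipschitzness of $H_{m_n}$ to absorb the in-between oscillation and Pólya's Theorem~\ref{polya} to pass from pointwise to uniform control, then upgrades this to $\sup_s|\hat H_n^{\theta}(s)-H_{m_n}(s)|=O_P(\sqrt{m_n/n})$ up to logarithmic factors.

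Combining the three pieces, every stochastic contribution is $O_P(\sqrt{m_n/n})$, so after scaling it is $O_P(m_n^{1-\delta}n^{-1/2})$, which vanishes for the admissible $\delta$ (any $\delta\in(0,1/2)$ when $m_n=o(\sqrt n)$, and $\delta>1-\tfrac{1}{2a}$ when $m_n\sim n^{a}$ with $a<1$), while the deterministic piece is $O(m_n^{-\delta})\to 0$; this establishes~\eqref{KS1}. For OB-II one repeats the argument verbatim with $\theta_n$ replaced by $\bar\theta_n$, the only change being that the shift $\tilde\Delta_n := \sqrt{m_n}(\bar\theta_n-\theta)$ carries an additional batch-bias term of order $m_n^{-1/2}$ (from the mean of the Edgeworth-corrected batch law), whose scaled contribution is again $O(m_n^{-\delta})$, yielding~\eqref{KS2}. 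The delicate point throughout is proving the $\sqrt{m_n/n}$ fluctuation rate \emph{uniformly} in $t$ under only stationarity and weak dependence, and verifying that it survives multiplication by $m_n^{1/2-\delta}$ --- this is precisely where conditions (A.1) and (A.3) and the choice of $\delta$ are consumed.
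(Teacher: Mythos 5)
There is a genuine gap, and it sits exactly where you flagged the argument as delicate: your first-order (Lipschitz) treatment of the centering-shift term. The identity $\hat F_n(t)=\hat H_n^{\theta}(t+\Delta_n)$ and the three-term split are fine, but bounding $\sup_t\lvert H_{m_n}(t+\Delta_n)-H_{m_n}(t)\rvert$ by $O_P(\lvert\Delta_n\rvert)=O_P(\sqrt{m_n/n})$ is too crude: after multiplying by $m_n^{1/2-\delta}$ it leaves $O_P(m_n^{1-\delta}n^{-1/2})$, which need \emph{not} vanish under the theorem's actual hypotheses. Take fully overlapping batches with $m_n=n^{2/3}$, so $b_n\sim n$: (A.1) holds, (A.3) holds since $b_n/m_n\sim n^{1/3}\to\infty$, yet $m_n^{1-\delta}n^{-1/2}=n^{1/6-2\delta/3}\to\infty$ for every $\delta<1/4$. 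You acknowledge this by restricting $\delta$ (your condition $\delta>1-\tfrac{1}{2a}$), but the theorem carries no such restriction --- its conclusion is meant to hold for any $\delta>0$ under (A.1) and (A.3), and the paper's proof delivers exactly that. The paper sidesteps the loss by never isolating a raw shift: it works with the exact marginal law $\mu_n(t)=P(\sqrt{m_n}(\theta(P_{1,n})-\theta(P_n))\le t)$ of the $\theta_n$-centered batch statistic and Edgeworth-expands it directly in~\eqref{edgeestsampledist}--\eqref{pone2}, where the random centering perturbs only the polynomial $\tilde p_1$ at relative order $\sqrt{m_n/n}$, hence the CDF at order $m_n^{-1/2}\sqrt{m_n/n}=n^{-1/2}$. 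The reason your bound is lossy is structural: $\Delta_n$ is asymptotically mean-zero and strongly correlated with the batch statistic (the batch is part of the full sample), so its effect on the \emph{distribution} is a second-order smoothing/variance-deflation effect, not a worst-case location shift of size $\lvert\Delta_n\rvert$; the pathwise Lipschitz bound discards precisely this cancellation.

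Two secondary mismatches with the paper's proof. First, the paper controls the fluctuation term $\sup_t\lvert\bar M_n(t)-\mu_n(t)\rvert$ in one stroke via a maximal inequality for the dependent empirical process (Theorem 2.1 of Dedecker--Merlev\`ede), yielding the bound~\eqref{smban} of order $b_n^{-2}$; this is where (A.3) is actually consumed, since $m_n^{2-4\delta}/b_n^2=(m_n/b_n)^2m_n^{-4\delta}\to 0$ for every $\delta>0$. Your variance-plus-grid route gives the uniform rate $O_P(\sqrt{m_n/n})$, which feeds into the same inadequate $m_n^{1-\delta}n^{-1/2}$ scaling --- and note that your accounting never genuinely uses (A.3), a sign that your bounds and the stated hypotheses have not been matched. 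Second, P\'{o}lya's Theorem~\ref{polya} converts pointwise to uniform convergence for deterministic CDF sequences against a continuous limit but carries no rate, so it cannot ``upgrade'' a pointwise $O_P$ rate to a uniform one; the grid/chaining argument with the Lipschitz limiting CDF (which you also invoke) is the correct device, so drop the appeal to P\'{o}lya in that step. Your OB-II sketch is fine in outline but inherits the same gap through $\tilde\Delta_n$.
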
 

Theorem~\ref{thm:hoa} broadly compares, on the $\sqrt{n}$-scaling, the random measures $P_{\mbox{\tiny OB-I},\varepsilon_n}$ and $P_{\mbox{\tiny OB-II},\varepsilon_n}$ produced by batching, against the true measure $P_{\varepsilon_n}$ that governs the error $\varepsilon_n = \theta(P_n) - \theta(P)$. Importantly, the theorem asserts that the supremum deviation between the random measures $P_{\mbox{\tiny OB-I},\varepsilon_n}$, $P_{\mbox{\tiny OB-II},\varepsilon_n}$ and $P_{\varepsilon_n}$ converge to zero faster than $\mathcal{O}(\frac{1}{\sqrt{n}})$, justifying the description ``higher order accurate.'' The implication of higher-order accuracy is interesting --- since each of $P_{\mbox{\tiny OB-I},\varepsilon_n}$ and $P_{\mbox{\tiny OB-II},\varepsilon_n}$ can be shown to converge to $P(\sqrt{\Sigma} W(1) \leq t)$ as $\mathcal{O}(\frac{1}{\sqrt{n}})$, higher-order accuracy implies that $P_{\mbox{\tiny OB-I},\varepsilon_n}$ and $P_{\mbox{\tiny OB-II},\varepsilon_n}$ are closer to $P_{\varepsilon_n}$ than to their respective limits. 

\subsection{Proofs of Theorem~\ref{thm:batchstrcons} and Theorem~\ref{thm:hoa}}

\begin{proof}[Proof of Theorem~\ref{thm:batchstrcons}]
From the triangular inequality, \begin{align}\label{check0} \MoveEqLeft \sup_{t \in \mathbb{R}^d}\left | \bar{M}_n(t) - P\bigg(\sqrt{n}\left(\theta(P_n) - \theta(P)\right) \leq t \bigg) \right | \leq \sup_{t \in \mathbb{R}^d} \left |\bar{M}_n(t) - P\bigg( \sqrt{\Sigma} \, W(1) \leq t \bigg) \right| \nonumber \\ & \hspace{0.8in} + \sup_{t \in \mathbb{R}^d}\,\left| P\bigg( \sqrt{\Sigma} \, W(1) \leq t \bigg) - P\bigg(\sqrt{n}\left(\theta(P_n) - \theta(P)\right) \leq t \bigg) \right|, \end{align} where $\bar{M}_n(t) := \frac{1}{b_n}\,\sum_{i=1}^{b_n} \mathbb{I}\bigg\{ \sqrt{m_n} \, \left( \theta(P_{i,n}) - \theta(P_n) \right) \leq t \bigg\}.$ 

Due to Assumption~\ref{ass:stationarity}, $\sqrt{m_n}(\theta(P_{i,n}) - \theta(P_n)), i=1,2,\ldots,b_n$ are identically distributed, and we have \begin{align}\label{mnbarsplit} \mathbb{E}\left[ \bar{M}_n(t) \right] &= \frac{1}{b_n}\, \sum_{i=1}^{b_n} P\bigg (\sqrt{m_n} \, \left( \theta(P_{i,n}) - \theta(P_n) \right) \leq t \bigg) \nonumber \\ &= P\bigg (\sqrt{m_n} \,\left( \theta(P_{1,n}) - \theta(P_n) \right) \leq t \bigg) =: \mu_n(t).\end{align} Also, due to Assumption~\ref{ass:phimixing}, Theorem~1 in~\cite{1978yos} assures us that there exists a constant $c_4 \in (0,\infty)$ such that for any $\epsilon>0$,\begin{equation}\label{bcprep} P\bigg(\left | \bar{M}_n(t) - \mu_n(t) \right | > \epsilon \bigg) \leq \frac{c_4}{b_n^2}.\end{equation} The inequality in~\eqref{bcprep} along with the condition in (A.2) implies that for any $\epsilon>0$, \begin{equation}\label{bccondsatisfied} \sum_{n=1}^{\infty} P\bigg(\left | \bar{M}_n(t) - \mu_n(t) \right | > \epsilon \bigg) < \infty,\end{equation} implying from Borel Cantelli's first lemma~\cite[pp. 59]{1995bil} that for each $t \in \mathbb{R}^d$, as $n \to \infty$, \begin{equation}\label{bcconc} \left | \bar{M}_n(t) - \mu_n(t) \right | \as 0.\end{equation} Next, we know from Assumption~\ref{ass:CLT} that as $n \to \infty$, \begin{equation} \label{estwkconv} \sqrt{m_n}\bigg( \theta(P_{1,n}) - \theta(P) \bigg) \,\,\inD \,\, \sqrt{\Sigma} \, W(1).\end{equation} From Assumption~\ref{ass:CLT} and the condition (A.1), we see that \begin{equation}\label{ass:rescale} \sqrt{m_n}\bigg(\theta(P) - \theta(P_n)\bigg) = \sqrt{\frac{m_n}{n}} \sqrt{n} \bigg(\theta(P)- \theta(P_n)\bigg) \inP 0.\end{equation} From~\eqref{estwkconv} and~\eqref{ass:rescale}, after applying Slutsky's theorem~\citep{1980ser}[pp. 19] (Theorem~\ref{slutsky}), we have $$\sqrt{m_n}\bigg(\theta(P_{1,n}) - \theta(P_n)\bigg) \,\, \inD \,\, \sqrt{\Sigma}\,W(1),$$ and hence for each $t \in \mathbb{R}^d$, \begin{equation}\label{nunconv} \left | \mu_n(t) - P\bigg(\sqrt{\Sigma} \, W(1) \leq t \bigg) \right | \to 0.\end{equation} From~\eqref{bcconc} and~\eqref{nunconv}, we see that for each $t \in \mathbb{R}^d$, as $n \to \infty$, \begin{equation}\label{ptwiselt}  \left | \bar{M}_n(t) - P\left(\sqrt{\Sigma} \, W(1) \leq t \right) \right | \as 0. \end{equation} Now apply P\'{o}lya's theorem (see Theorem~\ref{polya}) to get \begin{equation}\label{polya1}  \sup_{t \in \mathbb{R}^d}\,\left | \bar{M}_n(t) - P\left(\sqrt{\Sigma} \, W(1) \leq t \right) \right | \as 0,\end{equation} and \begin{equation}\label{polya2}  \sup_{t \in \mathbb{R}^d}\,\left | P\bigg(\sqrt{n} \, \left( \theta(P_{n}) - \theta(P) \right)) \leq t\bigg) - P\left(\sqrt{\Sigma} \, W(1) \leq t \right) \right | \to 0.\end{equation} Conclude from~\eqref{polya1},~\eqref{polya2} and~\eqref{check0} that the assertion of the theorem holds.
\end{proof}

\begin{proof}[Proof of Theorem~\ref{thm:hoa}] We provide a proof only for~\eqref{KS1}. A proof for~\eqref{KS2} follows along similar lines. Write \begin{align}\label{checkagain0} \MoveEqLeft\frac{1}{b_n} \sum_{i=1}^{b_n} \mathbb{I}\left\{\sqrt{m_n} \, \bigg( \theta(P_{i,n}) - \theta(P_n) \bigg) \leq t\right\} - P\left(\sqrt{n} \left(\theta(P_n) - \theta(P) \right) \leq t \right) &  \nonumber \\ & \hspace{1in}= \bar{M}_n(t) - \mu_n(t) + \mu_n(t) - P(\sqrt{n} \left(\theta(P_n) - \theta(P) \right) \leq t), \end{align} where we have adopted notation from the proof of Theorem~\ref{thm:batchstrcons}: \begin{align} \bar{M}_n(t) &:= \frac{1}{b_n}\sum_{i=1}^{b_n} \mathbb{I}\bigg\{ \sqrt{m_n} \, \left( \theta(P_{i,n}) - \theta(P_n) \right) \leq t \bigg\}; \nonumber \\
\mu_n(t) &:= \mathbb{E}[\bar{M}_n(t)] = P\bigg( \sqrt{m_n} \, \left( \theta(P_{1,n}) - \theta(P_n) \right) \leq t \bigg).
\end{align}  We will handle each of the sequences $\{\bar{M}_n(t) - \mu_n(t), n \geq 1\}$ and $\left\{\mu_n(t) - P(\sqrt{n} \left(\theta(P_n) - \theta(P) \right) \leq t), n \geq 1\right\}$ appearing in~\eqref{checkagain0} separately. Let's handle $\left\{\mu_n(t) - P\left(\sqrt{n} \left(\theta(P_n) - \theta(P) \right) \leq t \right), n \geq 1\right\}$ first by writing the Edgeworth expansion~\citep{1992hall}[Chapter 2] for each of $P\left( \sqrt{n} \, \left( \theta(P_{n}) - \theta(P) \right) \leq t \right)$ and $P\left(\sqrt{m_n} \, \left( \theta(P_{i,n}) - \theta(P_n) \right) \leq t\right)$ as follows. As $n \to \infty$, \begin{equation}\label{edgesampledist} P\left( \sqrt{n} \, \bigg( \theta(P_{n}) - \theta(P) \bigg) \leq t \right) = \Phi(t) + n^{-1/2}p_1(t)\varphi(t) + o(n^{-1}),\end{equation} where \begin{equation}\label{pone} p_1(t) = -\left\{k_{1,2} + \frac{1}{6}k_{3,1}(t^2-1)\right\},\end{equation} and \begin{equation}\label{edgeestsampledist} P\left( \sqrt{m_n} \, \bigg( \theta(P_{i,n}) - \theta(P_n) \bigg) \leq t \right) = \Phi(t) + m_n^{-1/2}\tilde{p}_1(t)\varphi(t) + o(m_n^{-1}),\end{equation} where \begin{equation}\label{pone2} \tilde{p}_1(t) = -\left\{k_{1,2} + \frac{1}{6}k_{3,1}(t^2-1) + \mathcal{O}(\sqrt{\frac{m_n}{n}}) \right\}.\end{equation} (The asymptotic expansions in~\eqref{edgesampledist} and~\eqref{edgeestsampledist} hold uniformly in $t \in \mathbb{R}$.) We see from~\eqref{edgesampledist} and~\eqref{edgeestsampledist} that \begin{align} \label{probseq} \MoveEqLeft m_n^{1/2 - \delta} \sup_{t \in \mathbb{R}^d}\, \bigg\{\mu_n(t) - P\left(\sqrt{n} \left(\theta(P_n) - \theta(P) \right) \leq t \right) \bigg\} \nonumber \\ & \hspace{1in} = m_n^{1/2 - \delta} \, \left(m_n^{-1/2} - n^{-1/2}\right)\, \sup_{t \in \mathbb{R}^d}\left\{p_1(t)\varphi(t) + \mathcal{O}(n^{-1/2})\right\} \nonumber \\ & \hspace{1in} \to 0 \mbox{ as } n \to \infty. \end{align} 

Next, we handle $$\bar{M}_n(t) - \mu_n(t) := \frac{1}{b_n} \sum_{i=1}^{b_n} \left[\underbrace{\mathbb{I}\bigg\{ \sqrt{m_n}(\theta(P_{i,n}) - \theta(P_n)) \leq t \bigg\} - P\bigg(\sqrt{m_n}(\theta(P_{i,n}) - \theta(P_n)) \leq t \bigg)}_{Z_n(t)}\right].$$ Note that the sequence $Z_n(t), n =1,2,\ldots,b_n$ is mean-zero and 
$\sup_{t \in \mathbb{R}^d}|Z_n(t)| \leq 2$. Furthermore, we can show that the functions $Z_n(\cdot), n = 1,2,\ldots,b_n$ lie in a sufficiently smooth Banach space implying that from Theorem 2.1 in~\citep{2015dedmer} there exists a constant $c_{0} < \infty$ such that for any $\epsilon>0$, \begin{equation}\label{smban} P\left( \sup_{t \in \mathbb{R}^d} \left | \bar{M}_n(t) - \mu_n(t) \right | > \epsilon \right) \leq \frac{c_{0}}{\epsilon^2}\frac{1}{b_n^2}.\end{equation} Now notice that due to~\eqref{probseq} and~\eqref{smban}, \begin{align}\label{martingalehoa} \MoveEqLeft P\left( m_n^{1/2 - \delta} \sup_{t \in \mathbb{R}^d} \left | \bar{M}_n(t) - P\left(\sqrt{n} \, \varepsilon_n \leq t \right) \right | > \epsilon \right) \nonumber \\ & \leq  P\left(m_n^{1/2 - \delta} \sup_{t \in \mathbb{R}^d} \left | \bar{M}_n(t) - \mu_n(t) \right | > \frac{\epsilon}{2}\right) \nonumber \\ & \hspace{1in} + P\left(m_n^{1/2 - \delta} \sup_{t \in \mathbb{R}^d} \left | \mu_n(t) -  P\left(\sqrt{n} \left(\theta(P_n) - \theta(P) \right) \leq t \right) \right | > \frac{\epsilon}{2}\right) \nonumber \\ & \leq \frac{16c_0}{\epsilon^4}\frac{m_n^{2-4\delta}}{b_n^2} + P\left(m_n^{1/2 - \delta} \sup_{t \in \mathbb{R}^d} \left | \mu_n(t) -  P\left(\sqrt{n} \left(\theta(P_n) - \theta(P) \right) \leq t \right) \right | > \frac{\epsilon}{2}\right) \nonumber \\ & \to 0 \mbox{ as } n \to \infty,\end{align} where the last line follows due to the condition in (A.3) and~\eqref{probseq}. Conclude that the assertion of the theorem holds.
\end{proof}

\section{OB-I and OB-II CONFIDENCE REGIONS}\label{sec:confreg} We now present OB-I and OB-II $(1-\alpha)$-confidence regions on $\theta$ constructed in a manner reminiscent of classical Student's $t$ confidence regions, by characterizing the weak limit of $\sqrt{n}\,\varepsilon_n$ after appropriate studentization. Specifically, consider 
\begin{align} T_{\mbox{\tiny OB-I},n} &:= \sqrt{n}\, \sqrt{\Sigma^{-1}_{\mbox{\tiny OB-I},n}} \, \varepsilon_n, \mbox{ where }  \Sigma_{\mbox{\tiny OB-I},n} := \frac{m_n}{b_n} \sum_{i=1}^{b_n} \varepsilon_{i,n} \, \varepsilon_{i,n}^{\intercal}; \nonumber \\
T_{\mbox{\tiny OB-II},n} &:= \sqrt{n}\, \sqrt{\Sigma^{-1}_{\mbox{\tiny OB-II},n}} \, \varepsilon_n, \mbox{ where }  \Sigma_{\mbox{\tiny OB-II},n} := \frac{m_n}{b_n} \sum_{i=1}^{b_n} \tilde{\varepsilon}_{i,n} \, \tilde{\varepsilon}_{i,n}^{\intercal}.\end{align} Suppose $T_{\mbox{\tiny OB-I},n}$ and $T_{\mbox{\tiny OB-II},n}$ converge weakly to characterizeable distribution-free weak limits $T_{\mbox{\tiny OB-I}}$ and $T_{\mbox{\tiny OB-II}}$, that is,  $$T_{\mbox{\tiny OB-I},n} \,\,\inD \,\, T_{\mbox{\tiny OB-I}} \mbox{ and } T_{\mbox{\tiny OB-II},n} \,\,\inD \,\, T_{\mbox{\tiny OB-II}}.$$ Then the corresponding $(1-\alpha)$-confidence ellipsoids on $\theta$ are given by \begin{equation}\label{ob1cr} \mathcal{C}_{\mbox{\tiny OB-I},p,1-\alpha} := \left\{x \in \mathbb{R}^d: \left\| \sqrt{n} \, \sqrt{\Sigma^{-1}_{\mbox{\tiny OB-I},n}} \left(x - \theta_n \right) \right\|_p \leq   t_{\mbox{\tiny OB-I},p,1-\alpha} \right\}\end{equation} and \begin{equation}\label{ob2cr} \mathcal{C}_{\mbox{\tiny OB-II},p,1-\alpha} := \left\{x \in \mathbb{R}^d: \left\| \sqrt{n} \, \sqrt{\Sigma^{-1}_{\mbox{\tiny OB-II},n}} \left(x - \bar{\theta}_n \right) \right\|_p \leq   t_{\mbox{\tiny OB-II},p,1-\alpha} \right\},\end{equation} where $t_{\mbox{\tiny OB-I},p,1-\alpha}$ and $t_{\mbox{\tiny OB-II},p,1-\alpha}$ are the $(1-\alpha)$ quantiles of random variables $\big\|T_{\mbox{\tiny OB-I}}\big\|_{p}$ and $\big\|T_{\mbox{\tiny OB-II}}\big\|_{p}$, respectively. Then, as long as the random variables $T_{\mbox{\tiny OB-I}}$ and $T_{\mbox{\tiny OB-II}}$ can be characterized (in the sense of being able to calculate any desired percentiles), the regions $\mathcal{C}_{\mbox{\tiny OB-I},1-\alpha}$ and $\mathcal{C}_{\mbox{\tiny OB-II},1-\alpha}$ are well-defined, computable, and asymptotically valid: $$\lim_{n \to \infty} P(\theta \in \mathcal{C}_{\mbox{\tiny OB-I},1-\alpha}) = 1-\alpha \mbox{ and } \lim_{n \to \infty} P(\theta \in \mathcal{C}_{\mbox{\tiny OB-II},1-\alpha}) = 1-\alpha.$$ Theorem~\ref{thm:ob1largebatch} and Theorem~\ref{thm:ob2largebatch} that follow characterize the weak limits $T_{\mbox{\tiny OB-I}}$ and $T_{\mbox{\tiny OB-II}}$. We omit proofs of Theorem~\ref{thm:ob1largebatch} and Theorem~\ref{thm:ob2largebatch} since they follow as a straightforward extension of the one-dimensional versions detailed in~\cite{2023suetal}.

\begin{theorem}[OB-I,~\cite{2023suetal}]\label{thm:ob1largebatch} Suppose that  Assumption~\ref{ass:stronginvar} holds, and that $\beta = \lim_{n \to \infty} \frac{m_n}{n} \in (0,1).$ Assume also that $b_n \to b \in \{2,3,\ldots,\infty\}$ as $n \to \infty$. Define \begin{numcases}{\chi^2_{{\mathrm{\tiny OB-I}}}(\beta,b) :=} \frac{1}{\beta(1-\beta)} \int_{0}^{1-\beta} \left( W(u+\beta) - W(u) - \beta W(1)\right)^{*2} \, du & $b = \infty$; \nonumber \\ \frac{1}{\beta b}\sum_{j=1}^{b} \left( W(c_j+\beta) - W(c_j) - \beta W(1)\right)^{*2} & $b \in \mathbb{N}\setminus \{1\}$, \nonumber \\ \end{numcases} where $a^{*2}:= aa^{\intercal}$ for $a \in \mathbb{R}^d$, and $c_j := (j-1)\frac{1-\beta}{b-1}$. Then, as $n \to \infty$, \begin{equation}\label{swag} \sqrt{\Sigma_{{\mathrm{\tiny OB-I},n}}} \inD \sqrt{\Sigma} \, \chi_{{\mathrm{\tiny OB-I}}}(\beta,b); \mbox{ and } \ T_{{\mathrm{\tiny OB-I},n}} \inD  \left(\chi_{{\mathrm{\tiny OB-I}}}(\beta,b)\right)^{-1}W(1) =: T_{{\mathrm{\tiny OB-I}}}.\end{equation} \qed \end{theorem}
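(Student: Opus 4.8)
The plan is to reduce every quantity in sight to a continuous functional of a single standard Wiener process and then invoke the continuous mapping theorem. Concretely, I would first establish the \emph{joint} weak convergence of the pair $(\Sigma_{\mathrm{\tiny OB-I},n},\ \sqrt{n}\,\varepsilon_n)$ to $\big(\sqrt{\Sigma}\,\chi^2_{{\mathrm{\tiny OB-I}}}(\beta,b)\,\sqrt{\Sigma},\ \sqrt{\Sigma}\,W(1)\big)$, where the \emph{same} Wiener process $W$ drives both coordinates; the conclusions for $\sqrt{\Sigma_{\mathrm{\tiny OB-I},n}}$ and for $T_{\mathrm{\tiny OB-I},n}$ then follow by applying continuous maps. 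Keeping the numerator $\sqrt{n}\,\varepsilon_n$ and the random scaling matrix $\Sigma_{\mathrm{\tiny OB-I},n}$ coupled through one $W$ is precisely what makes the studentized limit pivotal, i.e., free of $\Sigma$.

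First I would use the strong invariance principle (Assumption~\ref{ass:stronginvar}) to replace estimator deviations by Wiener increments. Writing batch $i$ on the window indexed by $s_i := (i-1)d_n$, strong invariance lets me represent $m_n(\theta(P_{i,n}) - \theta(P))$ by the increment $\sqrt{\Sigma}\,(W(s_i+m_n) - W(s_i))$ up to an a.s. error of order $m_n^{-1/2-\delta}\sqrt{\log m_n}$, and likewise $n(\theta_n - \theta(P))$ by $\sqrt{\Sigma}\,W(n)$. Introducing the rescaled process $W_n(u) := n^{-1/2}W(nu)$, which is again a standard Wiener process by self-similarity, and the normalized offsets $c_i := s_i/n$, a short computation gives $\sqrt{m_n}\,\varepsilon_{i,n} = \tfrac{1}{\sqrt{m_n/n}}\sqrt{\Sigma}\big(W_n(c_i+\tfrac{m_n}{n}) - W_n(c_i) - \tfrac{m_n}{n}W_n(1)\big) + o(1)$ a.s., and $\sqrt{n}\,\varepsilon_n = \sqrt{\Sigma}\,W_n(1)+o(1)$. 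Since $m_n/n \to \beta$ and, when $b<\infty$, $d_n/n \to (1-\beta)/(b-1)$ so that $c_i \to c_j = (j-1)(1-\beta)/(b-1)$, each scaled batch error converges to $\tfrac{1}{\sqrt{\beta}}\sqrt{\Sigma}\big(W(c_j+\beta)-W(c_j)-\beta W(1)\big)$. Here I must check that the strong-invariance error, after multiplication by $\sqrt{m_n}$ or $\sqrt{n}$, is $o(1)$ (it is, being $O(m_n^{-\delta}\sqrt{\log m_n})$), and that for non-additive functionals the windowed deviation still admits the increment representation, which is the content carried by the strong invariance posited for the estimator sequence together with its underlying linear (Bahadur-type) structure.

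With the finite-dimensional picture in hand, I would assemble $\Sigma_{\mathrm{\tiny OB-I},n} = b_n^{-1}\sum_{i=1}^{b_n} (\sqrt{m_n}\varepsilon_{i,n})(\sqrt{m_n}\varepsilon_{i,n})^\intercal$. When $b<\infty$ this is a fixed-length sum of outer products, so the continuous mapping theorem applied to the joint convergence of $(\sqrt{m_n}\varepsilon_{1,n},\ldots,\sqrt{m_n}\varepsilon_{b_n,n},\sqrt{n}\varepsilon_n)$ delivers the limit $\sqrt{\Sigma}\,\chi^2_{{\mathrm{\tiny OB-I}}}(\beta,b)\,\sqrt{\Sigma}$ directly. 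When $b=\infty$, the average $b_n^{-1}\sum_i$ is a Riemann sum over offsets $c_i$ densely filling $[0,1-\beta]$ with spacing $\approx (1-\beta)/b_n$; identifying $b_n^{-1}\sum_i$ with $\tfrac{1}{1-\beta}\int_0^{1-\beta}$ of the limiting integrand $\tfrac{1}{\beta}(W(u+\beta)-W(u)-\beta W(1))^{*2}$ yields the stated $\tfrac{1}{\beta(1-\beta)}\int_0^{1-\beta}(W(u+\beta)-W(u)-\beta W(1))^{*2}\,du$. Finally, applying the map $(M,v)\mapsto M^{-1/2}v$, which is continuous at every $(M,v)$ with $M$ positive definite, to $(\Sigma_{\mathrm{\tiny OB-I},n},\sqrt{n}\varepsilon_n)$ gives $T_{\mathrm{\tiny OB-I},n} \inD (\sqrt{\Sigma}\chi^2_{{\mathrm{\tiny OB-I}}}\sqrt{\Sigma})^{-1/2}\sqrt{\Sigma}\,W(1)$, in which the $\sqrt{\Sigma}$ factors cancel to leave the pivotal $\chi_{{\mathrm{\tiny OB-I}}}(\beta,b)^{-1}W(1)$.

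I expect the $b=\infty$ case to be the main obstacle. There the passage from the empirical average to the integral is not a finite-dimensional continuous-mapping step but requires genuine functional convergence: I would establish weak convergence of the increment \emph{process} $u\mapsto W_n(u+\tfrac{m_n}{n})-W_n(u)-\tfrac{m_n}{n}W_n(1)$ in $C[0,1-\beta]$ and then appeal to continuity of the outer-product-integral functional to pass to the limit of the Riemann sums uniformly. A secondary but essential technical point is the almost-sure positive-definiteness, hence invertibility, of $\chi^2_{{\mathrm{\tiny OB-I}}}(\beta,b)$, which is needed for $M\mapsto M^{-1/2}$ to be continuous at the limit; this is where the hypothesis $b\ge 2$ enters, and in dimension $d>1$ one must additionally ensure that enough batches contribute linearly independent increments.
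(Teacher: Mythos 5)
Your proposal is correct and matches the intended argument: the paper gives no proof of this theorem, explicitly deferring to the one-dimensional treatment in \cite{2023suetal}, whose route is exactly yours --- strong invariance to trade batch and grand estimators for Wiener increments, Brownian self-similarity to rescale to $[0,1]$, joint continuous mapping of $\left(\Sigma_{\mbox{\tiny OB-I},n}, \sqrt{n}\,\varepsilon_n\right)$ driven by a common $W$ so that the studentized limit is pivotal, and a functional-convergence/Riemann-sum passage for $b=\infty$. The caveats you flag yourself (the need for a \emph{windowed} rather than prefix form of Assumption~\ref{ass:stronginvar}, almost-sure positive definiteness of $\chi^2_{\mbox{\tiny OB-I}}(\beta,b)$, and, in $d>1$, the fact that the $\sqrt{\Sigma}$ ``cancellation'' requires an argument since matrix square roots do not commute) are precisely the details the paper folds into its claimed ``straightforward extension'' of the scalar proof, so your reconstruction is faithful to the cited argument.
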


Theorem~\ref{thm:ob2largebatch} is analogous to Theorem~\ref{thm:ob1largebatch} for the OB-II context and characterizes the random variable $T_{\mbox{\tiny OB-II}}$. 

\begin{theorem}[OB-II,~\cite{2023suetal}] \label{thm:ob2largebatch} Suppose that Assumption~\ref{ass:stronginvar} holds, and that $\beta := \lim_{n \to \infty} \frac{m_n}{n} >0.$ Assume also that $b_n \to b \in \{2,3,\ldots,\infty\}$ as $n \to \infty$. Define \begin{numcases}  {\chi^2_{{\mathrm{\tiny OB-II}}} (\beta,b) :=} \frac{\beta^{-1}}{1- \beta} \int_0^{1-\beta} \left( \tilde{W}_u(\beta) - \frac{1}{1-\beta} \int_{0}^{1-\beta} \tilde{W}_s(\beta)  \, ds \right)^{*2} du & $b = \infty$; \nonumber \\ \frac{1}{\beta} \frac{1}{b} \sum_{j=1}^{b} \left(\tilde{W}_{c_j}(\beta) -  \frac{1}{b} \sum_{i=1}^{b} \tilde{W}_{c_i}(\beta)  \right)^{*2}  & $b \in \mathbb{N}\setminus {1},$ \nonumber \\\end{numcases} where $a^{*2}:= aa^{\intercal}$ for $a \in \mathbb{R}^d$, $\tilde{W}_x(\beta) := W(x+\beta) - W(x), x \in [0,1-\beta]$, $\{W(t), t \in [0,1]\}$ is the standard Brownian motion, and $c_i := (i-1)\frac{1-\beta}{b-1}, i = 1,2, \ldots, b$.
Then, as $n \to \infty$, \begin{equation}\label{swag-samplemean} \sqrt{\Sigma_{{\mathrm{\tiny OB-II}}}} \inD \sqrt{\Sigma} \, \chi_{{\mathrm{\tiny OB-II}}}(\beta,b);\end{equation} and  \begin{numcases} {T_{{\mathrm{\tiny OB-II}}}(m_n,b_n) \inD } \left(\chi_{{\mathrm{\tiny OB-II}}}(\beta,b)\right)^{-1}\,\frac{\beta^{-1}}{1-\beta} \int_0^{1-\beta} \left(W(s+\beta) - W(s)\right) \, ds  & $b = \infty;$ \nonumber \\ \left(\chi_{{\mathrm{\tiny OB-II}}}(\beta,b)\right)^{-1}\frac{1}{\beta} \frac{1}{b}\sum_{i=1}^b W(c_i + \beta) - W(c_i)  & $b \in \mathbb{N} \setminus 1,$ \nonumber \\ \end{numcases} where $c_i := (i-1)\frac{1-\beta}{b-1}, i = 1,2, \ldots, b$. \qed
\end{theorem}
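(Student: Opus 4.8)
The plan is to route everything through the strong invariance principle of Assumption~\ref{ass:stronginvar}, which on a rich enough probability space lets us replace the batch estimators by increments of a \emph{single} Wiener path and then read off every limit by the continuous mapping theorem. Writing $s_i := (i-1)d_n$ and $e_i := (i-1)d_n + m_n$ for the endpoints of batch $i$, the elementary identity $m_n\big(\theta(P_{i,n}) - \theta(P)\big) = e_i\big(\theta(P_{e_i}) - \theta(P)\big) - s_i\big(\theta(P_{s_i}) - \theta(P)\big)$ (exact for mean-type functionals, and valid up to a remainder controlled by~\eqref{inv} in general) expresses the $i$-th batch statistic as a difference of the global partial-sum process at the batch endpoints. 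Multiplying~\eqref{inv} by $t$ and specializing to $t=e_i,s_i\le n$, Assumption~\ref{ass:stronginvar} gives, \emph{almost surely},
\[\max_{1\le i\le b_n}\Big\| m_n\big(\theta(P_{i,n}) - \theta(P)\big) - \sqrt{\Sigma}\,\big(W(e_i) - W(s_i)\big)\Big\| = O\big(n^{1/2-\delta}\sqrt{\log n}\big).\]
Since $m_n/n\to\beta\in(0,1)$, dividing by $\sqrt{m_n}\sim\sqrt{\beta n}$ makes the approximation error $O(n^{-\delta}\sqrt{\log n})\to 0$ a.s., so the large-batch regime reduces the theorem to a pure statement about increments of $W$.

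Next I would introduce the Brownian rescaling $\{W(nu)/\sqrt{n},\,u\in[0,1]\}\overset{d}{=}\{W(u),\,u\in[0,1]\}$ and note that the normalized endpoints converge, $s_i/n\to c_i=(i-1)\tfrac{1-\beta}{b-1}$ and $m_n/n\to\beta$, using $d_n=(n-m_n)/(b_n-1)$. Together with a.s. uniform continuity of the Wiener path this yields the joint convergence of $\big(\sqrt{m_n}(\theta(P_{i,n})-\theta(P))\big)_i$ to $\tfrac{1}{\sqrt{\beta}}\sqrt{\Sigma}\,\big(\tilde{W}_{c_i}(\beta)\big)_i$, where $\tilde W_x(\beta)=W(x+\beta)-W(x)$. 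Averaging over $i$ identifies the limit of $\sqrt{m_n}(\bar\theta_n-\theta)$, and subtracting gives the limit of the centered quantities $\sqrt{m_n}\,\tilde\varepsilon_{i,n}=\sqrt{m_n}(\theta(P_{i,n})-\bar\theta_n)$, namely $\tfrac{1}{\sqrt{\beta}}\sqrt{\Sigma}\big(\tilde W_{c_i}(\beta)-\tfrac1b\sum_j\tilde W_{c_j}(\beta)\big)$. Because $\Sigma_{\mbox{\tiny OB-II},n}=\tfrac{1}{b_n}\sum_i(\sqrt{m_n}\,\tilde\varepsilon_{i,n})(\sqrt{m_n}\,\tilde\varepsilon_{i,n})^{\intercal}$, the continuous mapping theorem delivers $\Sigma_{\mbox{\tiny OB-II},n}\inD\sqrt{\Sigma}\,\chi^2_{\mbox{\tiny OB-II}}(\beta,b)\,\sqrt{\Sigma}^{\intercal}$, and a further application (the matrix square root being continuous on the positive-definite cone, where the limit lives almost surely) gives~\eqref{swag-samplemean}, $\sqrt{\Sigma_{\mbox{\tiny OB-II},n}}\inD\sqrt{\Sigma}\,\chi_{\mbox{\tiny OB-II}}(\beta,b)$, read in the factorization sense $AA^{\intercal}=\Sigma_{\mbox{\tiny OB-II}}$.

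For the Studentized limit I would combine the two pieces. The numerator appearing in the region $\mathcal{C}_{\mbox{\tiny OB-II}}$ satisfies $\sqrt{n}(\bar\theta_n-\theta)=(m_n/n)^{-1/2}\sqrt{m_n}(\bar\theta_n-\theta)\inD\sqrt{\Sigma}\,\tfrac{\beta^{-1}}{1-\beta}\int_0^{1-\beta}\tilde W_s(\beta)\,ds$ for $b=\infty$ (with the finite-$b$ analogue $\sqrt{\Sigma}\,\tfrac{1}{\beta}\tfrac1b\sum_i\tilde W_{c_i}(\beta)$), jointly with the denominator through the same path-level convergence. Forming $T_{\mbox{\tiny OB-II},n}=\sqrt{n}\,\sqrt{\Sigma_{\mbox{\tiny OB-II},n}^{-1}}\,(\bar\theta_n-\theta)$ and applying the continuous mapping theorem, the crucial algebraic cancellation $(\sqrt{\Sigma}\,\chi_{\mbox{\tiny OB-II}})^{-1}\sqrt{\Sigma}=\chi_{\mbox{\tiny OB-II}}^{-1}$ removes the unknown $\sqrt{\Sigma}$ and produces exactly the pivotal limit claimed; robustly, the same cancellation holds at the level of the quadratic form $n(\bar\theta_n-\theta)^{\intercal}\Sigma_{\mbox{\tiny OB-II},n}^{-1}(\bar\theta_n-\theta)\inD N^{\intercal}\chi_{\mbox{\tiny OB-II}}^{-2}N$, which is what the norm $\|\cdot\|_p$ in $\mathcal{C}_{\mbox{\tiny OB-II}}$ actually uses and is manifestly distribution-free. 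This is the multivariate counterpart of the one-dimensional computation in~\cite{2023suetal}.

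The main obstacle is the case $b=\infty$, where $b_n\to\infty$ jointly with $n\to\infty$ and the finite averages $\tfrac{1}{b_n}\sum_i(\cdot)$ must be shown to converge to the integrals $\tfrac{1}{1-\beta}\int_0^{1-\beta}(\cdot)\,ds$ defining $\chi^2_{\mbox{\tiny OB-II}}(\beta,\infty)$. The clean way to handle this is to lift the argument to the path space $C[0,1]$: the strong invariance principle furnishes \emph{almost sure} uniform convergence of the rescaled estimator process to $\sqrt{\Sigma}\,W$, and each target quantity---the matrix $\chi^2_{\mbox{\tiny OB-II}}(\beta,b)$, the centered increments, and the integral numerator---is a continuous functional of the path (grid evaluations for finite $b$; integral functionals for $b=\infty$, legitimate since $s\mapsto\tilde W_s(\beta)$ is a.s. continuous, so the equispaced Riemann sums over $\{c_{i,n}\}$ with spacing $\tfrac{1-\beta}{b_n-1}\to 0$ converge to the integral). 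A single continuous-mapping step then yields both cases at once; the only point requiring genuine care is this joint passage $n\to\infty$, $b_n\to\infty$, which is controlled precisely by the almost sure uniform continuity of Brownian paths on $[0,1]$.
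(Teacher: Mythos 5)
Your proposal follows essentially the route the paper itself points to: the paper omits the proof entirely, deferring to the one-dimensional treatment in \cite{2023suetal}, and that treatment is exactly your strong-invariance-plus-continuous-mapping argument --- replace batch statistics by increments of a single Wiener path via Assumption~\ref{ass:stronginvar}, use the Brownian scaling $\{n^{-1/2}W(nu)\} \overset{d}{=} \{W(u)\}$ with $s_i/n \to c_i$ and $m_n/n \to \beta$, and take one continuous-mapping step on path space that covers finite $b$ (grid evaluations) and $b=\infty$ (Riemann sums converging by a.s. uniform continuity of $W$ on $[0,1]$) simultaneously. Your two points of care are well placed: you correctly center the OB-II numerator at $\bar{\theta}_n$ (consistent with~\eqref{ob2cr} and the stated limit, even though the displayed definition of $T_{\mbox{\tiny OB-II},n}$ writes $\varepsilon_n$), and you correctly note that~\eqref{swag-samplemean} must be read in the factorization sense $AA^{\intercal}$, with the distribution-free cancellation made airtight at the level of the quadratic form $n(\bar{\theta}_n-\theta)^{\intercal}\Sigma^{-1}_{\mbox{\tiny OB-II},n}(\bar{\theta}_n-\theta)$.

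The one step you should not wave through is the parenthetical claim that the identity $m_n\big(\theta(P_{i,n})-\theta(P)\big) = e_i\big(\theta(P_{e_i})-\theta(P)\big) - s_i\big(\theta(P_{s_i})-\theta(P)\big)$ holds ``up to a remainder controlled by~\eqref{inv} in general.'' Assumption~\ref{ass:stronginvar} constrains only the prefix estimators $\theta(P_{\lfloor t \rfloor})$ computed on initial segments; for nonlinear $\theta$ --- e.g., the quantile of Example I, which is precisely the setting these theorems are meant to serve --- the batch estimator on the window $(s_i, e_i]$ is \emph{not} a function of the two prefix estimators, and nothing in~\eqref{inv} bounds the discrepancy. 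The identity is exact only for mean-type (linear) functionals. To repair this you need either (i) an almost-sure Bahadur-type linearization reducing $\theta(P_{i,n})$ to a window average of influence functions plus an $o(m_n^{-1/2})$ remainder, after which your identity applies to the linear part, or (ii) a strong approximation postulated directly for the two-parameter window process, which is effectively how the cited reference sets up its assumption. This gap is inherited from the looseness of Assumption~\ref{ass:stronginvar} as stated in the paper, so it is as much a critique of the theorem's hypotheses as of your proof; but as a self-contained argument your first reduction fails for exactly the functionals the theorem advertises, and with that repair the remainder of your argument goes through as written.
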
 

We end this section with a few observations pertaining to Theorem~\ref{thm:ob1largebatch} and \ref{thm:ob2largebatch}.
\begin{enumerate} \item The random variables $T_{\mbox{\tiny OB-I}}$ and $T_{\mbox{\tiny OB-II}}$ are \emph{distribution-free} in the sense that they have no unknown parameters. Furthermore, they are \emph{implementable} in that fast code for percentile tables associated with functionals of $T_{\mbox{\tiny OB-I}}$ and $T_{\mbox{\tiny OB-II}}$ are now available through \texttt{https://web.ics.purdue.edu/\(\sim\)pasupath/solvers.html}. \item Theorem~\ref{thm:ob2largebatch} and Theorem~\ref{thm:ob1largebatch} pertain only to the large batch context, that is, the context where $\frac{m_n}{n} \to \beta>0$. The corresponding weak limits for the small batch context $\beta=0$ turns out to be Gaussian. \item When estimating $\psi(P)$, Lemma~\ref{lem:bigbatch} argued that big batches should not be used because even basic consistency is relinquished. Interestingly, our extensive numerical experience reveals that the opposite is true when constructing confidence regions. Specifically, large batches are preferred since they ensure rapid convergence to the nominal coverage probability. 

\item Another prominent batching variant arises due to using the \emph{weighted area estimator}~\citep{1983sch,2007aleetal,1990golsch,1990golmeksch} in place of $\Sigma_{{\mathrm{\tiny OB-I}}}$ and $\Sigma_{{\mathrm{\tiny OB-II}}}$. See~\citep{2023suetal} for the corresponding weak limit and also for variants that result from using small batch sizes, that is, $m_n$ such that $\frac{m_n}{n} \to 0$. 
\end{enumerate}

\section{NUMERICAL EXPERIMENTS}
In this section we study the performance of inference with batching in two numerical experiments. In the first experiment, we use a simple simulation that generates i.i.d. output data (observations) while in the second experiment we test our inference with batching using time-dependent output data of a single long simulation run. 

We claim that batching can be treated as another powerful device for inference, akin to resampling techniques. However, we do not intend to make comparisons in this paper and leave that for future research.
 
\subsection{Batching Inference with I.I.D. Observations: Tail of a Gamma Distribution}

In our first experiment we test the ability of OB-I and OB-II to approximate the sampling distribution of $\varepsilon_n= \theta(P_n) - \theta(P)$ in the service of uncertainty quantification. We estimate the 0.99-quantile of a gamma distribution from i.i.d. observations $Y_1,Y_2,\cdots,Y_n\sim \text{Gamma}(1,100)$ whose p.d.f. we denote by $f_Y(\cdot)$, implying (by CLT) that as $n \to \infty$, \begin{equation}\label{lt}\sqrt{n}\epsilon_n\inD\mathcal{N}\left(0,\frac{0.99\times0.01}{f_Y^2(\theta)}\right),\end{equation} where $\theta=\min\{y:\Pr(Y>y)=0.99\}$. The limiting distribution in~\eqref{lt} is depicted by a thin gray curve in Figure~\ref{fig:gamma-ex}, while the sampling distribution of $\varepsilon_n$ appears as the thick black curve. 

The methods proposed in this paper approximate the sampling distribution of $\sqrt{n}\varepsilon_n$ with batching instead of with the limiting distribution. The essential point being that when $n$ is small, the sampling distribution of $\sqrt{n}\epsilon_n$ may exhibit skewness not captured by the limiting distribution. Figure~\ref{fig:gamma-ex} illustrates this point and batching's attempt to approximate the distribution of $\varepsilon_n$. 
\begin{figure}[h]
\centering
    \includegraphics[width=\textwidth]{./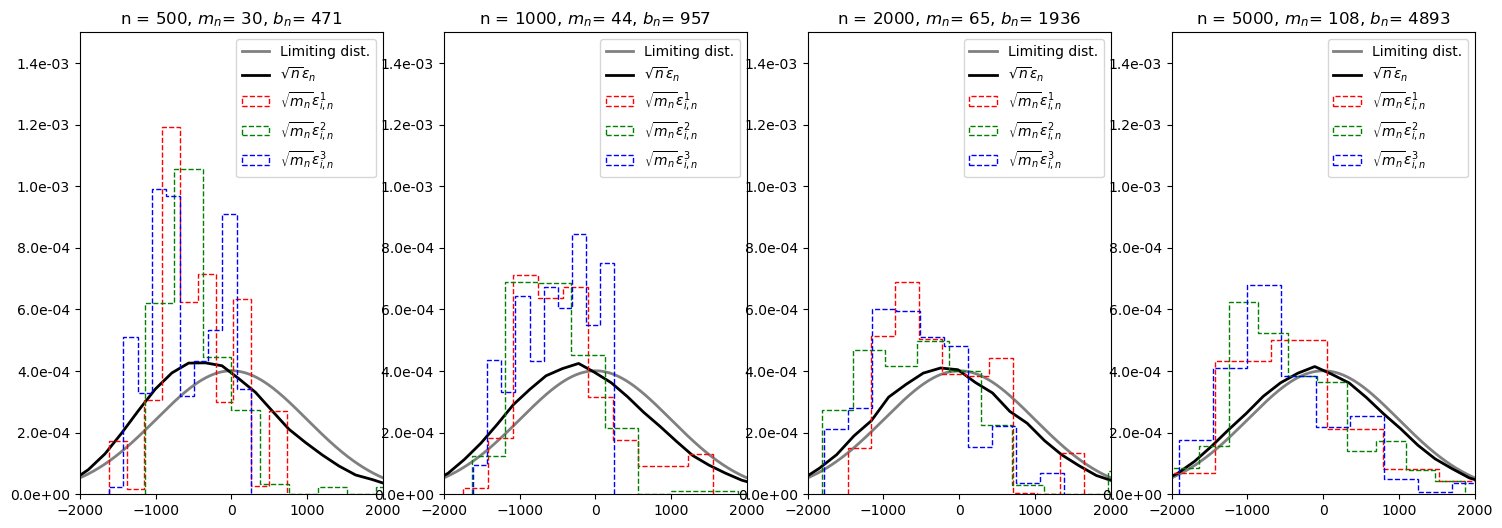}
    \caption{Histogram of three macro-replications of sampling distribution estimator with  $m_n=\sqrt{n}$ and $d_n=1$ (i.e., fully-overlapping batches) when the target $\theta(P)$ is the 99\% quantile of a gamma distribution.}
    \label{fig:gamma-ex}
\end{figure}

If the batch size $m_n$ in the previous experiment is $\mathcal{O}(n)$, for example, $m_n=0.2n$, then Lemma~\ref{lem:bigbatch} suggests that the consistency of the OB estimators is lost. This is illustrated in Figure~\ref{fig:gamma-ex2} where approximation suffers even with large $n$. 
On the other hand, if the goal is to construct confidence intervals that are asymptotically valid, large batches are preferred. This is illustrated through Table~\ref{tab:gamma-perf}.
\begin{figure}[h]
\centering
    \includegraphics[width=\textwidth]{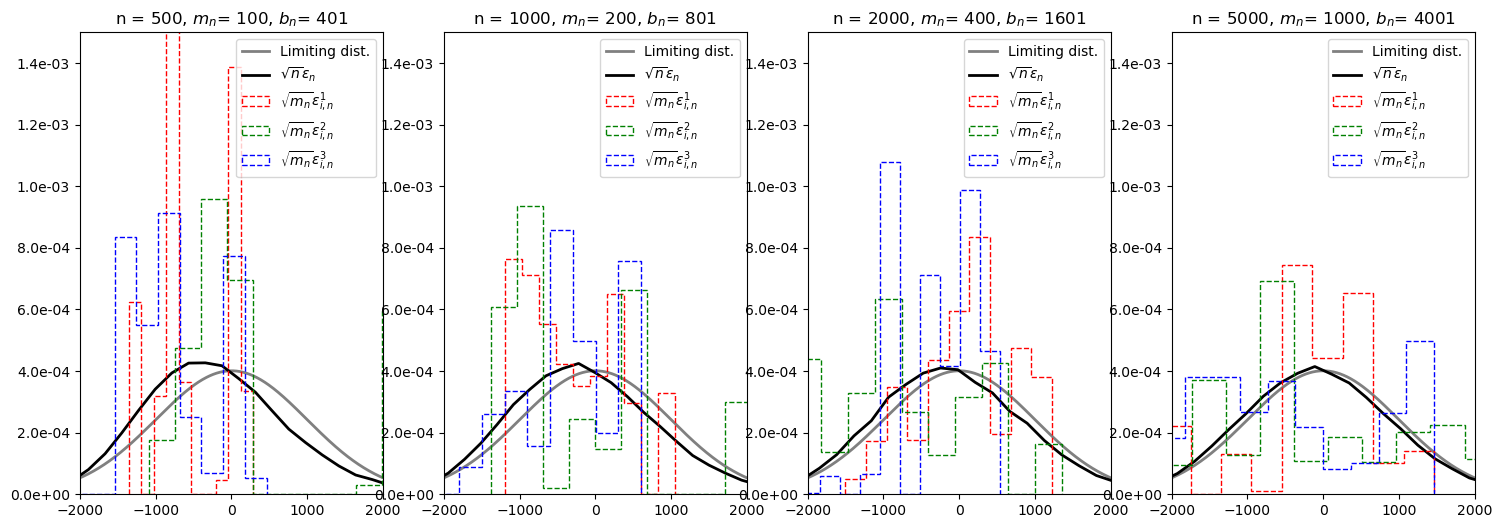}
    \caption{Histogram of three macro-replications of sampling distribution estimator with batching with $m_n=0.2n$ and $d_n=1$ (i.e., fully-overlapping batches) when the target $\theta(P)$ is the 99\% quantile of a gamma distribution. With growing number of observations $n$, we observe that large batch sizes struggle to consistently track the sampling distribution.}
    \label{fig:gamma-ex2}
\end{figure}

\begin{table}[h] 
\caption{Estimation quality of 0.99-quantile of a gamma distribution with 1000 macro-replications}
\label{tab:gamma-perf}
\small
\footnotesize
\scriptsize 
\tiny 
\centering
\begin{tabularx}{\linewidth}{X *{13}{S[table-format=2.1]}}
\toprule
& \multicolumn{4}{c}{CI with $m_{n} = 0.2n$} &  \multicolumn{8}{c}{RMSE of $\hat{\psi}_n$ with $m_{n} = \sqrt{n}$} \\
\cmidrule(lr){2-5}
\cmidrule(lr){6-13}
 & \multicolumn{4}{c}{coverage and half-width} & \multicolumn{2}{c}{bias} & \multicolumn{2}{c}{standard deviation} & \multicolumn{4}{c}{0.8-quantile} \\
 \cmidrule(lr){2-5}
 \cmidrule(lr){6-7}
 \cmidrule(lr){8-9}
 \cmidrule(lr){10-13}
 $n$  & {FOB-I} & {NOB-I} & {FOB-II} & {NOB-II} & {FOB-I} & {NOB-I} & {FOB-I} & {NOB-I} & {FOB-I} & {NOB-I} & {FOB-II} & {NOB-II}\\
 \midrule

 500 & 90.3\% & 88.7\% & 75.4\% & 72.6\% & 13.9 & 13.9 & 20.4 & 20.5 &29.2 & 30.4 &9.2&10.1 \\
 & {$(91.3)$}& {$(90.1)$}&{$(81.7)$} &{$(79.5)$} & & & & & & & \\
1000 & 92.0\%& 90.3\%&82.5\% &82.4\% & 10.4 & 10.4 &12.9 & 12.9 & 21.2 & 22.0 &7.3 & 8.0\\
& {$(68.1)$}& {$(67.3)$}&{$(63.8)$} &{$(62.7)$} & & & & & & & & \\
2000 & 93.5\%& 91.7\%& 89.0\%& 85.6\%& 7.3&7.3 &8.0 &8.0 &14.3 &14.7 &5.1 &5.6 \\
& {$(49.6)$}& {$(49.6)$}& {$(47.5)$}& {$(46.9)$}& & & & & & & \\
5000 & 95.0\%& 93.7\%&93.0\% & 91.6\%& 4.5& 4.5&4.1 &4.1 &8.3 & 8.6& 3.1&3.3\\
& {$(32.4)$}& {$(32.9)$}& {$(31.9)$}& {$(32.0)$}& & & & & & & \\
\bottomrule
\end{tabularx}
\end{table}

The numbers in Table~\ref{tab:gamma-perf} are obtained from $1000$ macro-replications and assimilating their constructed confidence intervals and estimated statistical functionals $\hat{\psi}_n^\text{b}$, $\hat{\psi}_n^\sigma$ and $\hat{\psi}_n^\text{q}(\gamma=80)$ with OB-I and OB-II. We also test the impact of overlap by introducing two versions for each of OB-I and OB-II, namely, FOB-I and FOB-II that use fully-overlapping batches with $d_n=1$, and NOB-I and NOB-II that use non-overlapping batches with $d_n=m_n$. The steps below compute the estimators and their mean-squared-error quality for each of the three functionals, as well as the confidence intervals' coverage and half-width quality:

\begin{enumerate}
    \item approximate $\theta(P)$ with $\tilde \theta(P)$ by sending $n\to\infty$ (e.g., $n=10^6$).
    \item approximate $\psi_n$ for $n\in\{500,1000,2000,5000\}$ with a side experiment:
    \begin{enumerate}
        \item compute the estimator-error $\epsilon_{n}^{k'}=\theta(P_{n}^{k'})-\tilde \theta(P)$ for $k'=1,2,\ldots,10^5$;
        \item approximate $\psi_n$ with $\tilde \psi_n$ from $ \epsilon_{n}^{k'},\ k'=1,2,\ldots,10^5$.
    \end{enumerate}
    \item approximate $\text{RMSE}(\hat\psi_n,\psi_n)$:
    \begin{enumerate}
        \item compute $\hat \psi_{n}^k$ using the overlapping batches and their estimator-errors $ \epsilon_{i,n}^k$ for macroreplications $k=1,2,\ldots,1000$ following Table~\ref{table:psi};
        \item compute $\widehat{\text{RMSE}}(\hat\psi_n,\tilde\psi_n)=\left(\sum_{k=1}^{1000} \|\hat \psi_{n}^k - \tilde \psi_n\|_2^2/1000 \right)^{1/2}$.
    \end{enumerate}
    \item approximate the coverage probability and mean half-width of the confidence interval $C_n$:
    \begin{enumerate}
        \item compute $C_{n}^{k}$ for macroreplications $k=1,2,\ldots,1000$ following~\eqref{ob1cr} and~\eqref{ob2cr};
        \item compute $\hat{\mathbb P}(\theta(P)\in C_n)=\sum_{k=1}^{1000}\mathbb I(\tilde \theta(P)\in C_{n}^{k})/1000$ as the estimated coverage probability;
        \item compute $\hat r(C_n)=\sum_{k=1}^{1000}r(C_{n}^k)/1000$ as the estimated mean half-width  where 
        $r(C_{n}^k)=\text{len}(C_n^k)/2$ is the $k$-th confidence interval's approximated half-width.
    \end{enumerate}
\end{enumerate}

Next, see Figure~\ref{fig:gamma-p7} where we compare the histograms of the $P_{\mbox{\tiny OB-I},\varepsilon_n}$ and $P_{\mbox{\tiny OB-II},\varepsilon_n}$ (that use  $\sqrt{\frac{m_n}{n}}\varepsilon_{i,n}$, the scaled errors from batches--see Remark~\ref{rem:scale} and Table~\ref{table:psi}) with those of the sampling distribution $P_{\varepsilon_n}$ using $n$ observations. As depicted by the figure and report summary in the header of each plot (corresponding to each data size $n$), the bias, the standard deviation, and the 80\%-quantiles of $\varepsilon_n$ inferred from batch estimates for both OB-I and OB-II become increasingly accurate. We point out some key observations from this figure:
\begin{itemize}
    \item[-] With the choice of $m_n=\sqrt{n}$ and $d_n=1$, the batch size remains relatively small with $n$ while the number of batches $b_n$ grows fast helping with a better inference on the distribution of error. 
    \item[-] With smaller $n$ the distribution of error $P_{\varepsilon_n}$ is more skewed to the right.
    \item[-] $P_{\mbox{\tiny OB-I},\varepsilon_n}$ and $P_{\mbox{\tiny OB-II},\varepsilon_n}$ seem to somewhat capture the skew in $P_{\varepsilon_n}$, and as expected, the quality of approximation improves with $n$.
    \item[-] The bias of the error $\psi^{\text{b}}_n$ is always negative but decreases with $n$. The bias estimates $\hat{\psi}^{\text{b}}_{\mbox{\tiny OB-I},n}$ become closer to $\psi^{\text{b}}_n$ with $n$. OB-II does not provide a meaningful bias estimator as $\hat{\psi}^{\text{b}}_{\mbox{\tiny OB-II},n}=0$ always.
    \item[-] The standard deviation of the error $\psi^{\sigma}_n$ drops from $44$ to $14$ with $n$, becoming the major source of error compared to bias. OB-I and OB-II provide the same estimates $\hat{\psi}^{\sigma}_{\mbox{\tiny OB-I},n}=\hat{\psi}^{\sigma}_{\mbox{\tiny OB-II},n}$ that more accurately follow $\psi^{\sigma}_n$ with $n$.
    \item[-] The 0.8 quantile of the error $\psi^{\text{q}}_n$ drops from $27$ to $11$ with $n$, that is more slowly than its standard deviation. As both OB-I and OB-II generate more accurate estimates of the error's 0.8-quantile with $n$, we observe that $\hat{\psi}^{\text{q}}_{\mbox{\tiny OB-II},n}$ is always closer to $\psi^{\text{q}}_n$ than $\hat{\psi}^{\text{q}}_{\mbox{\tiny OB-I},n}$.
\end{itemize}
\begin{figure}[h]
\centering
    \includegraphics[width=\textwidth]{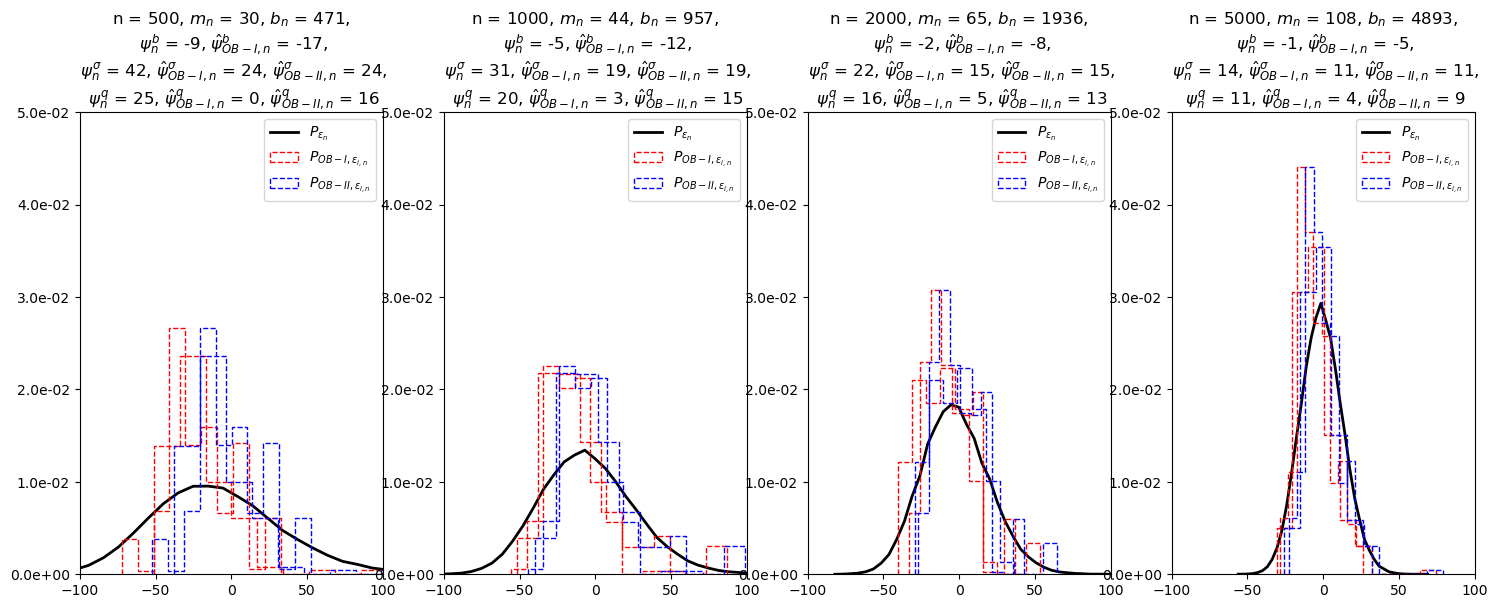}
    \caption{The error of estimating the 99\% quantile of a gamma distribution approximated with OB-I and OB-II and growing number of observations $n$ using $m_n=\sqrt{n}$ and $d_n=1$ (i.e., fully-overlapping batches).}
    \label{fig:gamma-p7}
\end{figure}

Lastly, to provide further perspective on the effect of the batch size $m_n$, number of batches $b_n$, and the offset $d_n$, on the coverage probability and expected half-width displayed in Table~\ref{tab:gamma-perf}, we systematically vary these parameters to generate the curves in Figure~\ref{fig:ob-nob-trends}. The curves in Figure~\ref{fig:ob-nob-trends} convey some clear trends. For instance, for the fully overlapping batch case, large batch sizes consistently yield better coverage although they are accompanied by a nearly linearly increasing expected half-width. The corresponding trend for the non-overlapping case is not as clear since as the batch size $m_n$ increases, the number of batches $b_n$ necessarily decreases. Furthermore, the curves in the non-overlapping context also suggest a generally higher variance than analogous curves for the overlapping context.

\begin{figure}[h]
\centering
    \includegraphics[width=0.7 \textwidth]{./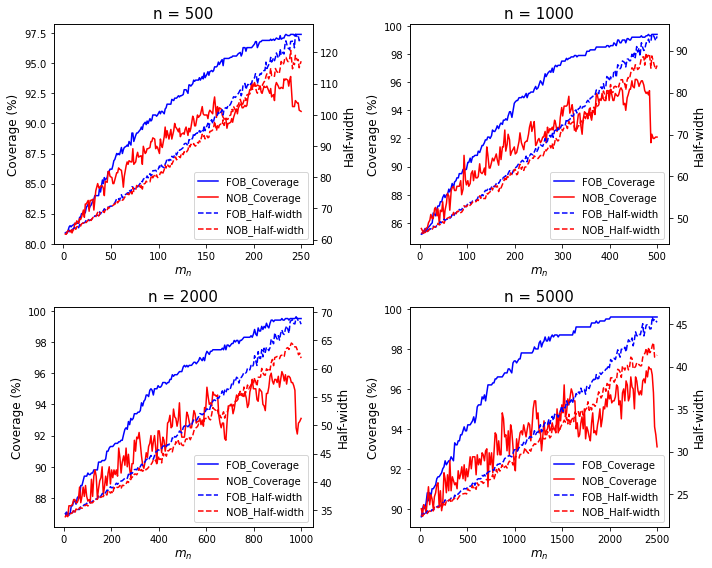}
    \caption{Trends in coverage (Left Y-axis) and half-widths (Right Y-axis) for fully overlapping (FOB) and non-overlapping (NOB) batches across varying batch size and observations with 1000 macro-replications.}
    \label{fig:ob-nob-trends}
\end{figure}

The next experiments were conducted in SimOpt \citep{SimOptsoftware2022}--an open-source testbed of SO problems and solvers, wherein solvers are coded to support flexible experimentation. All experiments use CRN by default. Our intention in using CRN across scenarios is to better estimate the \emph{ordering} of performance metrics when comparing multiple batching techniques. 

\subsection{Batching Inference with none-I.I.D. Observations: Total Cost of an Inventory System}

The inventory problem features a single product whose inventory level is dictated by an ($s$, $S$) inventory policy, meaning that when the inventory level drops below $s$, an order is placed so that the inventory level can be brought back up to $S$. The demand $D$ in each day has a mean of 100 units and a variance of 10000, independent across days. The order lead time $L$ is assumed to be discrete and random, following a Poisson distribution with mean of 6 days, independent across days. The objective is to estimate the expected daily total cost, which is the sum of back-order cost ($b=$\$$4$ per unit),  holding cost ($h=$\$$1$ per unit per day), fixed cost ($f=$\$$36$ per order), and variable cost ($v=$\$$2$ per unit). Let the daily inventory level based on the ($s$, $S$) inventory policy, the realized sequence of demands $\{D_j,\ j=1,2,\ldots,n\}$ and lead times $\{L_{j_k},\ k=1,2,\ldots\}$ for the $k$-th time of placing an order be $\{W_j,\ j=1,2,\ldots,n\}$. 
Then the daily total cost will be $\{Y_j,\ j=1,2,\ldots,n\}$ where \[Y_j=\underbrace{f\mathbb I(W_j\leq s)+v(S-W_j)\mathbb I(W_j\leq s)}_{\text{ordering cost (fixed and variable)}}\underbrace{+hW_j\mathbb I(W_j>0)}_{\text{holding cost}}\underbrace{-bW_j\mathbb I(W_j<0)}_{\text{back-order cost}}.\]

We use a warm-up of $1000$ days to ensure the stationarity condition outlined in Assumption~\ref{ass:stationarity} and change the number of days after warm-up from $n=500$ to $n=5000$ for a long-run simulation study. We consider $\theta=\min\{y:\Pr(Y\leq y)\geq 90\%\}$ (the steady-state $90\%$-quantile of the daily costs) and gamma-distributed demands and generate the same sampling distributions using batching as done in the i.i.d. random variate generation example. 

The main difference between the previous example and here is the existence of a simulation logic model and observations being time-dependent. The results of approximating the estimator error with batching is, however, consistent with the i.i.d. experiment. Figure~\ref{fig:sim-ex} illustrates the ability of error distributions generated with batching to capture particular properties of the sampling distribution in particular its variance and quantiles.

\begin{figure}[h]
\centering
    \includegraphics[width=\textwidth]{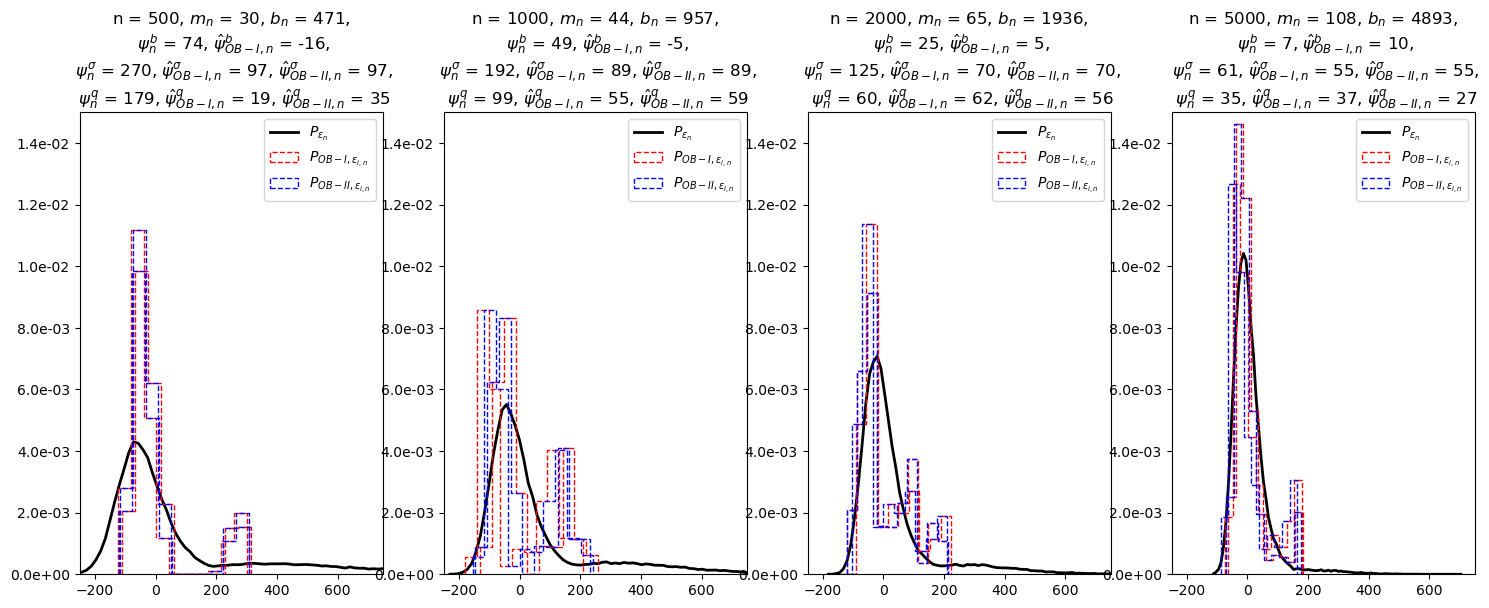}
    \caption{The error of estimating the 90\% quantile of a gamma distribution approximated with batching with growing number of observations $n$ and choice of $m_n=\sqrt{n}$ and $d_n=1$ (i.e., fully-overlapping batches).}
    \label{fig:sim-ex}
\end{figure}

A general practitioner would use simulation outputs from $n$ days to make an inference about the inventory system. Their judgement on the goodness of their estimates with the batching schemes proposed in the paper would provide them with a confidence interval and an estimate of bias, variance, or quantiles of the error. Table~\ref{tab:single-exp-sim} provides these inferences that the experimenter would obtain with the approximate true value of $\theta(P)\approx 1848.8$ and the true $\psi_n$ values listed at the bottom for comparison.

\begin{table}[h]  
\caption{Single experiment inference on inventory costs with gamma-distributed demand for 500 days. The FOB-I and FOB-II use full overlap. The NOB-I and NOB-II use no overlap. FOB-I has the closest estimation for 0.9-quantile but with $n=500$ all cases underestimate the bias and standard deviation.}
\label{tab:single-exp-sim}
\centering
\small
\footnotesize
\begin{tabularx}{\linewidth}{X *{5}{S[table-format=2.1]}}
\toprule
& \text{95\% CI ($C_n$)} & \text{bias ($\hat{\psi}^{\text{b}}_n$)}  &\text{standard deviation ($\hat{\psi}^{\sigma}_n$)}  & \text{0.8-quantile ($\hat{\psi}^{\text{q}}_n$)} \\
\midrule
FOB-I := OB-I($d_n=1$) &  \text{[936.6, 2421.7]} & {28.2} & 138.2 & 169.5 \\
\addlinespace[0.5em]
NOB-I := OB-I($d_n=m_n$)  & \text{[952.0, 2406.4]} & {22.4} &  137.9 & 131.5 \\
\addlinespace[0.5em]
FOB-II := OB-II($d_n=1$) & \text{[1326.7, 2657.0]} & {-} & 138.2 &  141.3\\
\addlinespace[0.5em]
NOB-II := OB-II($d_n=m_n$)  & \text{[1281.4, 2619.8]} & {-} & 137.9 &  109.1\\
\midrule
$\psi_{n}$ & {-}&  {74.7}& {272.4} & {177.2} \\       
\bottomrule
    \end{tabularx}
\end{table}

We next investigate the performance both in terms of accurate confidence intervals and $\psi$ estimators following the same procedure as in the previous experiment. Table~\ref{tab:cost-perf} summarizes the findings. Note, again that OB-II method does not estimate the bias and has the same standard deviation estimate as OB-I. The best coverage is achieved by FOB-I in every $n$ although at the expense of slightly elongated half-widths. Simultaneously, we also observe a rapid drop in the error of estimating bias with $n$. Error of estimating standard deviation always dominates that of bias in magnitude but it also rapidly drops with $n$. In this simulation study the amount of overlap does not make a significant difference in terms of estimates of bias or standard deviation. For the 0.8-quantiles, we see some differences. FOB-II achieves the lowest error except in the case of $n=5000$. It also suggests that full overlap is effective with the non-i.i.d. observations. It is possible that the optimal choice of batch size may vary with $n$ but we leave that as an open question.

\begin{table}[h] 
\caption{Estimation quality of total cost 0.90-quantile in the inventory system with gamma-distributed demand using 1000 macro-replications. NOB-I and NOB-II stand for non-overlapping batches with OB-I($d_n=m_n$) and OB-II($d_n=m_n$).}
\label{tab:cost-perf}
\small
\footnotesize
\scriptsize 
\tiny 
\centering
\begin{tabularx}{\linewidth}{X *{13}{S[table-format=2.1]}}
\toprule
& \multicolumn{4}{c}{CI with $m_{n} = 0.2n$} &  \multicolumn{8}{c}{RMSE of $\hat{\psi}_n$ with $m_{n} = \sqrt{n}$} \\
\cmidrule(lr){2-5}
\cmidrule(lr){6-13}
 & \multicolumn{4}{c}{coverage and half-width} & \multicolumn{2}{c}{bias} & \multicolumn{2}{c}{standard deviation} & \multicolumn{4}{c}{0.8-quantile} \\
 \cmidrule(lr){2-5}
 \cmidrule(lr){6-7}
 \cmidrule(lr){8-9}
 \cmidrule(lr){10-13}
 $n$  & {FOB-I} & {NOB-I} & {FOB-II} & {NOB-II} & {FOB-I} & {NOB-I} & {FOB-I} & {NOB-I} & {FOB-I} & {NOB-I} & {FOB-II} & {NOB-II}\\
 \midrule
500 & 93.3\% & 88.5\% & 92.4\% & 86.8\% & 115.8& 116.9& 164.1& 166.5 &158.4 & 167.4&130.3 & 134.0\\
 & {$(415.1)$}& {$(400.0)$}&{$(391.5)$} & {$(370.1)$} & & & & & & & & & \\
\addlinespace[0.5em]
1000 & 92.2\% & 89.4\% & 91.9\% & 88.6\% & 74.9 & 75.4 & 107.5 & 109.2 &93.9 & 97.1&75.3 & 76.8\\
& {$(363.9)$}& {$(346.0)$}&{$(346.6)$} & {$(330.0)$}& & & & & & & & \\
\addlinespace[0.5em]
2000& 93.9\%& 91.6\%&93.4\% &90.5\% & 28.2& 28.7&59.1 & 59.8 &27.0 &32.5 &25.9 &29.3 \\
& {$(289.3)$}& {$(276.7)$}&{$(278.9)$} &{$(266.1)$} & & & & & & & & \\
\addlinespace[0.5em]
5000& 94.7\%& 91.9\%&92.8\% &91.1\% & 5.1& 5.6&14.1 & 14.8 &17.9 & 19.7&21.2 & 22.3\\
& {$(179.1)$}& {$(176.1)$}&{$(174.3)$} &{$(171.1)$} & & & & & & & & \\
\bottomrule
\end{tabularx}
\end{table}

\begin{figure}[h]
\centering
    \includegraphics[width=\textwidth]{./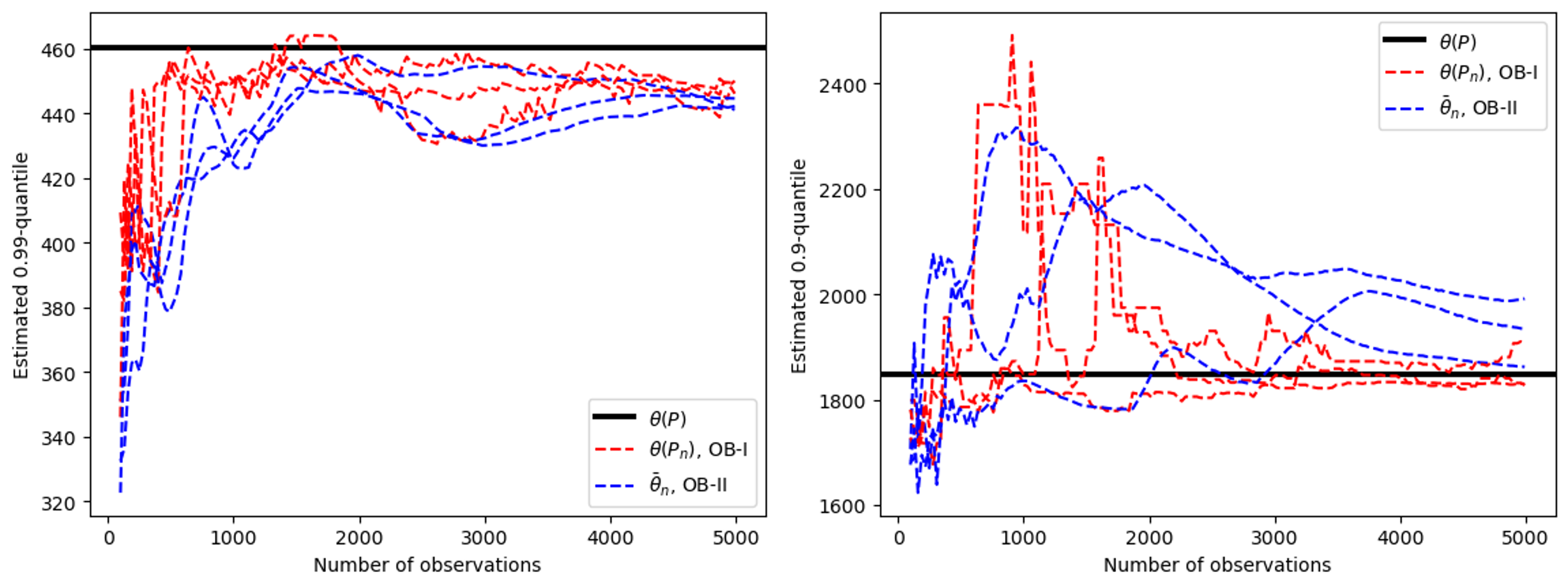}
    \caption{Quantile estimation trends between OB-I and OB-II under two different scenarios (left: i.i.d., right: non-i.i.d.) with three independent replications.} 
    \label{fig:comparison}
\end{figure}

Lastly, we conclude the experimental study by highlighting the differences between OB-I and OB-II using the behavior of point estimates as a function of $n$. Figure~\ref{fig:comparison} illustrates that compared to OB-I, OB-II tends to have less fluctuation. In the non-i.i.d. experiment, the subsequent (nearby) observation is more likely to be smaller when the peak-point occurs (i.e., due to the penalized backordering cost), leading to momentary impact on $\theta_{n}$ causing volatile OB-I estimate (right panel of Figure~\ref{fig:comparison}).  Larger $n$ values stabilize OB-I more quickly. OB-II estimates while smoother, take a longer time to stabilize. In the i.i.d. experiment (left panel of Figure~\ref{fig:comparison}), OB-I and OB-II have a similar trend of exhibiting negative bias.
Depending on the properties of the simulation output, the number of batches, size of the batch, and lastly the amount of overlap (offset), the accuracy of OB-I and OB-II point estimates can be different. We leave this for a future investigation.

\section{SUMMARY OF INSIGHT AND POSTSCRIPT}

We summarize several insights that have emerged through our analysis of batching estimators for simulation output analysis.  

\begin{enumerate} 
 
\item[(a)] Much like bootstrapping, batching can be reliably used as an omnibus device for uncertainty quantification using simulation output.  

\item[(b)] The appropriate choice of batch sizes and the number of batches are crucial to ensure consistency and higher-order accuracy. Broadly, large overlapping batches perform best when constructing confidence regions in the sense of ensuring fastest convergence to nominal coverage probability. Large batches cannot be used when estimating $\psi$ due to the resulting loss in consistency. 

\item[(c)] A frequent question among simulation practitioners is whether resampling is needed if ``additional simulation runs can be performed'' easily. This question becomes moot if efficiency in the sense of teasing out more information from a \emph{given} amount of data is of interest. Batching and bootstrapping are methods that allow for efficient inference.  

\item[(d)] Our treatment assumes that the simulation output data $(Y_1,Y_2, \ldots,Y_n)$ are in steady state. This is usually not the case in practice, leading to what has been called the \emph{initial transient problem}. See~\cite{2010passch} for an annotated bibliography on this problem. For appropriate inference, ideas from removing the initial transient need to be used in concert with batching, constituting what is an interesting research question. \item[(e)] A consistent estimator of the variance constant $\Sigma$ is neither needed nor preferred when constructing a confidence region using batching. Interestingly, however, in the sequential context where the data $Y_1,Y_2, \ldots$ are revealed one by one, a risk-optimal estimator of $\theta$ might entail consistently estimating $\sigma^2$. See~\cite{2020pasyeh} for more. \item[(f)] Virtually all  discussion in this paper applies to estimators constructed in the context of \emph{digital twins}~\citep{2022biletal}.
\item[(g)] Parametric batching, analogous to parametric bootstrap~\citep{2017che}, has not been sufficiently explored and should form a topic of future research.
\item[(h)] Bias estimation tends to be tricky and delicate, and should be performed with care. This issue is not specific to batching and similar caution has been issued even in the context of the bootstrap and the jackknife~\citep{1994efrtib}. 
\item[(i)] There is a deep and interesting connection between variance estimation and certain types of input model uncertainty, as explained through semi-parametric estimation~\citep{2008kos}. The treatment here has relevance to input model uncertainty when seen through the semi-parametric estimation lens.

\item[(j)] OB-I and OB-II methods presented here need to be formally compared against bootstrapping and against each other. Numerical experience suggests that OB-I and OB-II outperform bootstrapping in contexts involving dependent data but a theoretical statement confirming such dominance is an open question.

\end{enumerate}

\section*{Acknowledgements}  Sara Shashaani gratefully acknowledges the U.S. National Science Foundation for support provided by grant number CMMI-2226347. 
Raghu Pasupathy gratefully acknowledges the U.S. Office of Naval Research for support provided through grants N000141712295 and 13000991. He also thanks Prof. Bruce Schmeiser and Prof. Peter Glynn for many insightful discussions on the topic of inference. 

\bibliographystyle{apalike}
\bibliography{stochastic_optimization}

\end{document}